\newcommand{\eq}{\leftrightarrow}
\newcommand{\Eq}{\Leftrightarrow}
\newcommand{\imp}{\rightarrow}
\newcommand{\Imp}{\Rightarrow}
\newcommand{\et}{\wedge}
\newcommand{\vel}{\vee}
\newcommand{\Et}{\bigwedge}
\renewcommand{\phi}{\varphi}
\newcommand{\union}{\cup}
\newcommand{\Union}{\bigcup}
\newcommand{\inter}{\cap}
\newcommand{\Inter}{\bigcap}
\newcommand{\power}{\mathcal P}
\newcommand{\bisim}{{\raisebox{.3ex}[0mm][0mm]{\ensuremath{\medspace \underline{\! \leftrightarrow\!}\medspace}}}}
\newcommand{\domain}{\mathcal{D}}
\newtheorem{theorem}{Theorem}
\newtheorem{example}[theorem]{Example}
\newtheorem{definition}[theorem]{Definition}
\newtheorem{proposition}[theorem]{Proposition}
\newtheorem{corollary}[theorem]{Corollary}
\newcommand{\lang}{\mathcal L}
\newcommand{\C}{\mathcal C}
\newcommand{\VV}{\mathcal V}
\newcommand{\FF}{\mathcal F}
\newcommand{\weg}[1]{}
\newcommand{\powerset}{\mathcal{P}}
\newcommand{\RR}{\mathfrak R}
\newcommand{\cg}{R}
\newcommand{\cp}{\pmb{R}}
\newcommand{\p}{p}
\newcommand{\np}{\overline{p}}
\title{Communication Pattern Logic: \\ Epistemic and Topological Views}
\author{Armando Casta\~{n}eda \and Hans van Ditmarsch \and David A.\ Rosenblueth \and Diego A.\ Vel\'azquez\thanks{Armando, David, and Diego are affiliated to UNAM in Mexico. Hans is affiliated to University of Toulouse, CNRS, IRIT in France.}}
\date{}
\begin{document}

\maketitle

\begin{abstract}
We propose \emph{communication pattern logic}. A communication pattern describes how processes or agents inform each other, independently of the information content. The full-information protocol in distributed computing is the special case wherein all agents inform each other. We study this protocol in distributed computing models where communication might fail: an agent is certain about the messages it receives, but it may be uncertain about the messages other agents have received. In a dynamic epistemic logic with distributed knowledge and with modalities for communication patterns, the latter are interpreted by updating Kripke models. We propose an axiomatization of communication pattern logic, 
and we show that collective bisimilarity (comparing models on their distributed knowledge) is preserved when updating models with communication patterns. We can also interpret communication patterns by updating \emph{simplicial complexes}, a well-known topological framework for distributed computing. We show that the different semantics correspond, and propose collective bisimulation between simplicial complexes.
\end{abstract}

\paragraph*{Keywords} distributed computing,
modal logic,
combinatorial topology,
multiagent systems,
dynamic epistemic logic

\section{Introduction} \label{sec.intro}

Epistemic logic \cite{hintikka:1962} investigates knowledge and change of knowledge in multi-agent systems, both in temporal epistemic logics~\cite{halpernmoses:1990,Pnueli77,dixonetal.handbook:2015} and in dynamic epistemic logics~\cite{hvdetal.del:2007}, including synchronous and asynchronous semantics based on action histories \cite{jfaketal.JPL:2009,degremontetal:2011,BalbianiDG22}. Recently, logics of knowledge and its dynamics have been proposed for simplicial complexes, topological structures modelling (a)synchronous computation \cite{herlihyetal:2013,GoubaultLR21,ledent:2019,goubaultetal_postdali:2021}. 
A state (world) in a multi-agent Kripke model describes the knowledge of all agents in an integrated way. In contrast, a state (local state, vertex) in a simplicial complex describes the knowledge of a single agent. To obtain a complete state description we must then combine the different agent views. 
Logics with a topological interpretation to study distributed systems are appealing, given the  connection between 
combinatorial topology and distributed computing \cite{BG93,HS93,SZ93,herlihyetal:2013}. Epistemic logic with such a topological interpretation can enrich epistemic analysis with the topological machinery to
study distributed computing problems, such as formalizing topological subdivisions as epistemic updates, formalizing intersection in higher dimensions with distributed knowledge, or formalizing connectivity, or preservation of connectivity after update, with common knowledge \cite{GoubaultLR21,ledent:2019,diego:2021,goubaultetal_postdali:2021,hvdetal.simpl:2022}.

In temporal and dynamic epistemic logics we typically model how specific information contained in a message from or to an agent changes the global state of information. In contrast, the \emph{communication patterns} of this contribution describe how agents communicate with each other independently of the information content of the message. Precursor dynamic epistemic logics of such information exchange are \cite{AgotnesW17,Baltag20,diego:2021}. In \cite{AgotnesW17} the authors model a subgroup $B$ of the set of agents $A$ sharing all their knowledge. As intuitively this describes how to make the distributed knowledge between the agents in $B$ common knowledge between the agents in $B$ (modulo Moorean phenomena \cite{moore:1942,Roelofsen07,hollidayetal:2010}), this is called \emph{resolving distributed knowledge}. The agents not in $B$ are aware that the agents in $B$ perform this update. In \cite{Baltag20} this idea is generalized to describe agents sharing their knowledge in arbitrary ways, as when only agent $a$ gets to know all that $b$ knows whereas only $b$ gets to know all that $c$ knows. The authors call such information exchanges \emph{reading events}. The communication patterns we investigate originate in \cite{diego:2021}, which models that agents may be \emph{uncertain} about what other agents learn in such updates, such as when $a$ gets to know all that $b$ knows, while $a$ remains uncertain whether $b$ also learns all that $a$ knows. Before we proceed, let us look in some detail at a simple communication pattern, that also counts as resolving distributed knowledge.

\begin{example} \label{exampleone}
Suppose that Anne knows whether $p_a$ whereas Bill knows whether $p_b$, as depicted in Fig.~\ref{annebill}. A simple communication pattern consists of Anne and Bill successfully telling each other all they know. We can also see this as non-deterministic choice between four public announcements: Anne telling Bill that she knows $p_a$ and Bill telling Anne that he knows $p_b$, or Anne telling Bill that she knows $\neg p_a$ and Bill telling Anne that he knows $\neg p_b$, and so on. That makes four, mutually exclusive, public announcements compared to a single communication pattern. We represent this update as the update of a Kripke model but also as a corresponding update of a simplicial complex. In Fig.~\ref{annebill}, states of the Kripke model and vertices of the simplicial complex are labelled with values of propositional variables $p_a$ and $p_b$, where $\np_a$ and $\np_b$ stand for $\neg p_a$ and $\neg p_b$.

A state in the Kripke model corresponds to an edge in the simplicial complex, and a link in the Kripke model (an equivalence class) corresponds to a vertex in the simplicial complex. The representations are perfectly dual and satisfy the same formulas. 

If there are three agents, a state in the Kripke model corresponds to a triangle in the simplicial complex (as in the other examples in our contribution), if there are four, to a tetrahedon, and so on. Formal explanations will only be given from Sect.~\ref{sec.structures} onwards.

\begin{figure}[h]
\center
\scalebox{.85}{
\begin{tabular}{ccc}
\begin{tikzpicture}
\node (00) at (0,0) {$\np_a\np_b(s)$};
\node (01) at (0,2) {$\np_a\p_b(t)$};
\node (10) at (2.5,0) {$\p_a\np_b(v)$};
\node (11) at (2.5,2) {$\p_a\p_b(u)$};
\draw (00) -- node[above,fill=white,inner sep=1pt] {$b$} (10);
\draw (01) -- node[above,fill=white,inner sep=1pt] {$b$} (11);
\draw (00) -- node[right,fill=white,inner sep=1pt] {$a$} (01);
\draw (10) -- node[right,fill=white,inner sep=1pt] {$a$} (11);
\end{tikzpicture}
%
&
\begin{tikzpicture}
\node (t) at (4,1) {$\Imp$};
\node (t) at (4,0) {\color{white}$\Imp$};
\end{tikzpicture}
&
%
\begin{tikzpicture}
\node (00tt) at (5,0) {$\np_a\np_b(s)$};
\node (01tt) at (5,2) {$\np_a\p_b(t)$};
\node (10tt) at (7.5,0) {$\p_a\np_b(v)$};
\node (11tt) at (7.5,2) {$\p_a\p_b(u)$};
%
\end{tikzpicture}
 \\
\begin{tikzpicture}
\node (l) at (-1.5,0) {$\np_a$};
\node (r) at (1.5,0) {$\p_a$};
\node (d) at (0,-1.5) {$\np_b$};
\node (u) at (0,1.5) {$\p_b$};
\draw (l) -- node[above,fill=white,inner sep=1pt] {$t$} (u);
\draw (l) -- node[above,fill=white,inner sep=1pt] {$s$} (d);
\draw (r) -- node[right,fill=white,inner sep=1pt]  {$u$} (u);
\draw (r) -- node[right,fill=white,inner sep=1pt] {$v$} (d);
\end{tikzpicture}
&
\begin{tikzpicture}
\node (t) at (4,1) {$\Imp$};
\node (t) at (4,-.5) {\color{white}$\Imp$};
\end{tikzpicture}
&
\scalebox{.8}{
\begin{tikzpicture}
\node (lt) at (3.5,0) {$\np_a$};
\node (rt) at (7.5,0) {$\p_a$};
\node (blt) at (3.5,-1) {$\np_a$};
\node (brt) at (7.5,-1) {$\p_a$};
\node (dt) at (5,-2.5) {$\np_b$};
\node (ut) at (5,1.5) {$p_b$};
\node (dtt) at (6,-2.5) {$\np_b$};
\node (utt) at (6,1.5) {$p_b$};
\draw (lt) -- node[above,fill=white,inner sep=1pt] {$t$} (ut);
\draw (blt) -- node[above,fill=white,inner sep=1pt] {$s$} (dt);
\draw (rt) -- node[right,fill=white,inner sep=1pt]  {$u$} (utt);
\draw (brt) -- node[right,fill=white,inner sep=1pt] {$v$} (dtt);
\end{tikzpicture}
}
\end{tabular}
}
\caption{Anne and Bill tell each other all they know}
\label{annebill}
\end{figure}
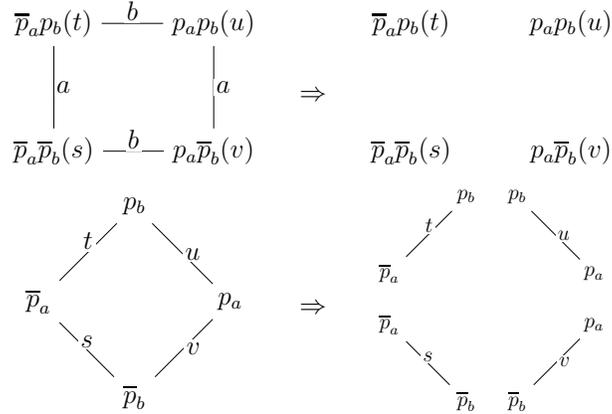
\end{example}

Communication pattern logic has epistemic modalities for distributed knowledge and dynamic modalities for communication patterns, that are interpreted by updating models. We give an axiomatization that is a slight generalization of those in \cite{AgotnesW17,Baltag20}, and we show that communication patterns preserve collective bisimilarity. 
We also interpret communication pattern logic in a topological semantics for simplicial complexes, where we provide correspondence between the Kripke and the simplicial semantics, and we additionally propose collective bisimulation between simplicial complexes, with the obvious sanity checks. We finally prove that the logics are the same.

Communication patterns describe distributed systems known as \emph{oblivious}, 
where messages between agents can be lost in a round of communication, and where the same communication failures may occur at any round. The repeated application of a communication pattern describes the dynamics of information exchange through such rounds of communication. 
Oblivious distributed systems have been extensively studied in distributed computing (e.g.\ \cite{ABFGLS11,AG13,CBS09,NSW19}), including topological modellings of the 
\emph{iterated immediate snapshot} that captures the computational power of multicore architectures with only simple read/write operations for exchanging information \cite{herlihyetal:2013} (see also Ex.~\ref{snapshot}, later).

We conclude the introduction with an overview of the content of our contribution.
Sects.~\ref{sec.structures} and \ref{sec.logic} define the syntax and semantics of communication pattern logic, and provide its axiomatization. Sect.~\ref{sec.simp} interprets communication pattern logic on simplicial complexes. Sect.~\ref{sec.comparison} concludes with further research.

\section{Structures} \label{sec.structures}

Given are a finite set of \emph{agents} $A$ and a set of \emph{propositional variables} $P \subseteq P' \times A$, where $P'$ is a countable set. For $B \subseteq A$ and $Q \subseteq P$, $Q \inter (P' \times B)$ is denoted $Q_B$ (where  $Q_a$ is $Q_{\{a\}}$), and $(p,a) \in P$ is denoted $p_a$. The set $P_a$ consists of the \emph{local variables} of agent $a$. 

\begin{definition}[Epistemic model]
An {\em epistemic model} $M$ is a triple $(W,\sim,L)$, where for each $a \in A$, $\sim_a$ is an {\em equivalence relation} on the \emph{domain} $W$ (also denoted $\domain(M)$) consisting of \emph{states} (or \emph{worlds}), and where $L: W \imp \power(P)$ is the \emph{valuation} (function). Given $w \in W$, $(M,w)$ is a \emph{pointed epistemic model}. For $\Inter_{a \in B} \sim_a$ we write $\sim_B$, and for $\{ w' \in W \mid w' \sim_a w\}$ we write $[w]_a$. We further require epistemic models to be \emph{local}, that is: for each $a \in A$ and $v,w \in W$: if $v \sim_a w$ then $L(v)_a =L(w)_a$. 
\end{definition}
An epistemic model encodes uncertainty among the agents about the value of other agents' local variables and about the knowledge of other agents.

\begin{definition}[Communication graph]
A {\em communication graph} $\cg$ is a reflexive binary relation on the set of agents $A$, that is, $\cg \in \power(A \times A)$ and such that for all $a \in A$, $(a,a) \in \cg$. A {\em communication pattern} $\cp$ is a set of communication graphs, that is, $\cp \subseteq \power(A \times A)$. A {\em pointed communication pattern} is a pair $(\cp,\cg)$, where $\cg\in\cp$.
\end{definition}
Expression $(a,b) \in \cg$ means that the message sent by $a$ is received by $b$, or, more precisely, that there is a channel from $a$ to $b$ on which $a$ has transmitted all she knows. In terms of epistemic logic: an agent transmits or sends `all she knows', if she announces the distinguishing formula of her equivalence class.\footnote{Given an epistemic model with domain $S$ and $S' \subseteq S$, a distinguishing formula for $S'$ is true in all states in $S'$ and false in all states in the complement of $S'$.} Observe that what $a$ knows is different in each of $a$'s equivalence classes. Ex.~\ref{exampleone} demonstrates this well.

For $(a,b) \in \cg$ we write $a \cg b$. We let $\cg b$ represent the \emph{in-neighbourhood} of $b$, that is, $\cg b = \{a \in A \mid a \cg b\}$. 
Also, $\cg B := \Union_{b \in B} \cg b$. 
We let $\cg B \equiv \cg' B$ denote ``for all $a \in B$, $\cg a = \cg' a$'' (this notation is used frequently in subsequent proofs). Note that $\cg B \equiv \cg' B$ implies $\cg B = \cg' B$ but not the other way round.\footnote{Let $\cg a =\{a\}$ and $\cg b = \{a,b\}$, whereas $\cg' a = \cg' b = \{a,b\}$. Then $\cg \{a,b\} = \cg' \{a,b\}$, namely the set $\{a,b\}$, but $\cg \{a,b\} \not\equiv \cg' \{a,b\}$, because $\cg a \neq \cg' a$.} The \emph{identity relation} $I$ on the set of agents $A$ is $\{(a,a) \mid a \in A\}$. The \emph{universal relation} $U$ is $A \times A$.

A communication graph is a reflexive relation, because we assume that an agent always receives her own message. (An artifact of the relation is that we even assume reflexivity if the agent did not send a message.) But not every other agent may receive the message. Different communication graphs $\cg,\cg'$ in a communication pattern may inform a given agent $a$ in the same way ($\cg a = \cg' a$) even when agent $a$ is uncertain to what extent other agents inform each other (for $b \neq a$, $\cg b$ may differ from $\cg' b$). 

\begin{definition}[Updated epistemic model]
Given an epistemic model $M=(W,\sim,L)$ and a communication pattern $\cp$, the \emph{updated} epistemic model $M \odot \cp = (\dot W, \dot\sim, \dot L)$ of $M$ with $\cp$ is defined as: 
\[\begin{array}{lcl}
\dot W & = & W \times \cp \\
(w,\cg) \dot\sim_a (w',\cg') & \text{iff} &  w \sim_{\cg a} w' \text{ and } \cg a = \cg' a \\
\dot L(w,\cg) & = & L(w)
\end{array}\]
\end{definition}
The updated epistemic model encodes how the knowledge has changed after agents have informed each other according to communication pattern $\cp$. The new relation $\dot\sim_a$ for agent $a$ is the intersection $\sim_{\cg a}$ of the relations of all agents from which agent $a$ received messages.
 
In order to compare the information content of structures we now define bisimulation \cite{blackburnetal:2001}. For logics with distributed knowledge this notion is called \emph{collective bisimulation} \cite{Roelofsen07}.

\begin{definition}[Collective bisimulation]
A relation $Z$ between the domains of epistemic models $M = (W,\sim,L)$ and $M'=(W',\sim',L')$ is a {\em (collective) bisimulation}, notation $Z: M \bisim M'$, if for all $(w,w') \in Z$:
\begin{itemize}
\item {\bf atoms}: for all $p_a \in P$, $p_a \in L(w)$ iff $p_a \in L'(w')$;
\item {\bf forth}: for all nonempty $B \subseteq A$ and for all $v \in W$, if $w \sim_B v$ then there is $v'\in W'$ such that $(v,v') \in Z$ and $w' \sim_B v'$;
\item {\bf back}: for all nonempty $B \subseteq A$ and for all $v' \in W'$, if $w' \sim_B v'$ then there is $v\in W$ such that $(v,v') \in Z$ and $w \sim_B v$.
\end{itemize}
If there is a bisimulation $Z$ between $M$ and $M'$ we write $M \bisim M'$, and if there is one containing $(w,w')$ we write $(M,w)\bisim (M',w')$. We then say that $M$ and $M'$, respectively $(M,w)$ and $(M',w')$, are \emph{bisimilar}. 
\end{definition}
The (more standard) notion where {\bf forth} and {\bf back} only hold for singletons $B$ (that is, for individual agents) we here call {\em standard bisimulation} \cite{blackburnetal:2001}.

\begin{example}[Bisimilar but not collectively bisimilar] \label{ex.biscolbis}
We adapt the standard two-agent example, wherein the agents are uncertain about a propositional variable not known by either (which is not local), to a three-agent example wherein all agents know their local variables (which is local, as we require). Fig.~\ref{fig.bisnoncbis} shows two models for three agents $a,b,c$ that are (standardly) bisimilar but not collectively bisimilar. Note that agent $c$ has the identity relation on both models (and reflexive arrows are not drawn). 
%
%
\begin{figure}[h] 
\center
\begin{tikzpicture}
\node (m) at (-1.5,0) {$M:$};
\end{tikzpicture}
 \ 
\begin{tikzpicture}
\node (0) at (0,0) {$p_ap_b\overline{p}_c$};
\node (0w) at (3,0.5) {$w$};
\node (1) at (3,0) {$p_ap_bp_c$};
\draw[-] (0) -- node[above] {$a,b$} (1);
\end{tikzpicture}
\quad
\begin{tikzpicture}
\node (ma) at (5.5,0) {$M':$};
\node (ma) at (5.5,-1) {\color{white}$M':$};
\end{tikzpicture}
 \ 
\begin{tikzpicture}
\node (00) at (7,0) {$p_ap_b\overline{p}_c$};
\node (10) at (10,0) {$p_ap_bp_c$};
\node (01) at (7,2) {$p_ap_bp_c$};
\node (11) at (10,2) {$p_ap_b\overline{p}_c$};
\node (00w) at (9.5,0.5) {$w'$};
\draw[-] (00) -- node[above] {$a$} (10);
\draw[-] (01) -- node[above] {$a$} (11);
\draw[-] (00) -- node[left] {$b$} (01);
\draw[-] (10) -- node[right] {$b$} (11);
\end{tikzpicture}
\caption{Bisimilar but not collectively bisimilar}
\label{fig.bisnoncbis}
\end{figure}
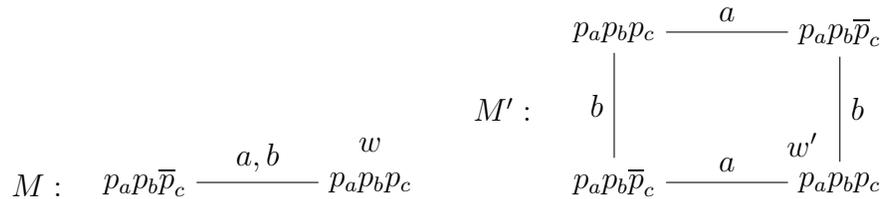
\end{example}

Let us see some examples of communication patters, and how to update epistemic models with them. 

\begin{example}[Byzantine attack] \label{example.byzantine}
This example is taken from \cite{diego:2021} and models Byzantine attack \cite{lamportetal:1982,DworkM90}. We let $A = \{a,b\}$. Generals $a$ and $b$ wish to schedule an attack, where $b$ desires to learn whether $a$ wants to `attack at dawn' ($p_a$) or `attack at noon' ($\neg p_a$). General $a$ now sends her decision to general $b$ in a message that may fail to arrive. This fits the communication pattern $\cp = \{I,R^{ab}\}$ where $R^{ab} = I \union \{(a,b)\}$, which models that $a$ is uncertain whether her message has been received by $b$. In this instantiation of Byzantine generals, general $b$ has no local variable. See Fig.~\ref{fig.a}. We note that the `communication pattern model' $A'$ of \cite[Fig.~1]{diego:2021} has the same update effect.

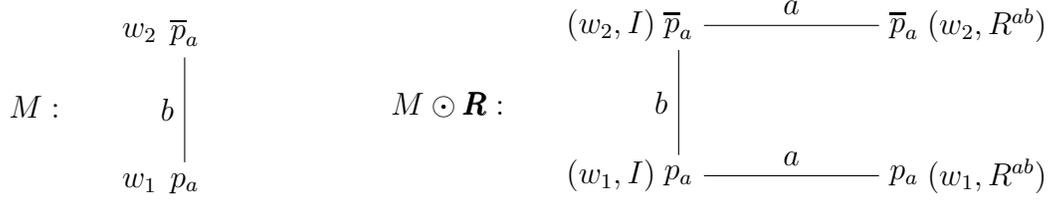
\begin{figure}[h]
\center
\begin{tikzpicture}
\node (m) at (-5,0) {$M:$};
\node (m) at (-5,-1) {\color{white}$M:$};
\end{tikzpicture} %
\quad 
\begin{tikzpicture}
\node (bm00) at (-4.6,0) {$w_1$};
\node (am01) at (-4.6,2) {$w_2$};
\node (m00) at (-4,0) {$p_a$};
\node (m01) at (-4,2) {$\np_a$};
\draw[-] (m00) -- node[left] {$b$} (m01);
\end{tikzpicture} 
%
\hspace{2cm}
\begin{tikzpicture}
\node (m) at (0,0) {$M \odot \cp:$};
\node (m) at (0,-1) {\color{white}$M \odot \cp:$};
\end{tikzpicture} %
\quad 
\begin{tikzpicture}
\node (b00) at (1.1,0) {$(w_1, I)$};
\node (b10) at (6.1,0) {$(w_1, R^{ab})$};
\node (a01) at (1.1,2) {$(w_2, I)$};
\node (a11) at (6.1,2) {$(w_2,R^{ab})$};
\node (00) at (2,0) {$p_a$};
\node (10) at (5,0) {$p_a$};
\node (01) at (2,2) {$\np_a$};
\node (11) at (5,2) {$\np_a$};
\draw[-] (00) -- node[above] {$a$} (10);
\draw[-] (00) -- node[left] {$b$} (01);
\draw[-] (01) -- node[above] {$a$} (11);
\end{tikzpicture}
\caption{Byzantine attack}\label{fig.a}
\end{figure}
The visual conventions in Fig.~\ref{fig.a} are as in Fig.~\ref{annebill}: states are labelled with valuations, where $\overline{p}_a$ means that $p_a$ is false. States indistinguishable for an agent are connected by a link labelled with that agent. We now also have named the states. Let us verify some updated links. In the updated model $M \odot \cp$, for example, $(w_1,I) \sim_b (w_2,I)$ because $w_1 \sim_b w_2$ and $I b = I b = \{b\}$, whereas $(w_1,R^{ab}) \not\sim_b (w_2,R^{ab})$ because, although $R^{ab} b = \{a,b\}$, we  have that $w_1 \not\sim_{ab} w_2$. Relation $\sim_{ab}$ is the identity relation on both models.
\end{example}

\begin{example}[Immediate Snapshot] \label{snapshot}
This communication pattern can be seen as the set of ordered partitions of $A$. An ordered partition is a list of mutually exclusive subsets of $A$ such that their union is $A$. In distributed computing such an ordered partition is called a \emph{schedule}, where each element in the partition is called a \emph{concurrency class}. It is a standard way to model asynchronous communication in rounds by way of a shared memory \cite{herlihyetal:2013}. For example, if $A = \{a,b\}$ then the set of ordered partitions is represented by $\{R^{ab},R^{ba},U\}$, where $R^{ba} = I \union \{(a,b)\}$ and $R^{ba} = I \union \{(b,a)\}$. The list notation for these three partitions is respectively $a.b$, $b.a$ and $ab$. The order in this partition is the order in which the agents write on and read from the shared memory. For example, in $a.b$, first $a$ writes her value and reads all written values, which at this stage is only her own value. After that $b$ writes his value and reads all written values, which now are the values of $a$ and $b$. In other words, $b$ receives $a$'s message but not vice versa: $R^{ab}$. For $b$ this is indistinguishable from the (trivial) partition $ab$ wherein $a$ and $b$ simultaneously write and read, and wherein $b$ also reads the values of $a$ and $b$: $U$.
\end{example}

\begin{example}[Fully asynchronous] \label{asynch}
This is the communication pattern $\cp = \{ R \in \power(A\times A) \mid I \subseteq R \}$.  It consists of all communication graphs. It represents complete absence of information on the arrival of messages with other agents.
\end{example}

\begin{example}[No message arrives] \label{none}
This is the communication pattern $\cp = \{I\}$. We now have that for any model $M$, $M \odot \cp$ is bisimilar (and even isomorphic) to $M$, via the relation $Z$  linking each $w \in W$ to $(w,I)$. After a failed round of message transmission the knowledge of the agents has not changed. 
\end{example}
%
It is important to note the difference between the communication pattern  $\{I\}$ of Ex.~\ref{none} where there is certainty about transmission failure embodied in $I$, from communication patterns \emph{containing} $I$, such as the Byzantine attack of Ex.~\ref{example.byzantine} or the asynchrony of Ex.~\ref{asynch}. Pattern $\{I\}$ means that it is common knowledge among the agents that no message was received. It therefore intuitively means that no message was \emph{sent}. Whereas the pointed communication pattern $(\{\cg^{ab},I\},I)$ of Byzantine attack intuitively means that agent $a$ sent the message but that it was not received by $b$, and that $b$ did not send a message. In pointed communication pattern $(\{ R \in \power(A\times A) \mid I \subseteq R \},I)$, all agents sent a message and no message was received by anyone. Now that is transmission failure!

\begin{example}[Public Announcement] \label{ex.pa}
A \emph{public announcement} wherein all agents announce all they know is the communication pattern $\cp = \{U\}$ (and similarly, an announcement by the agents in $B \subseteq A$ of all they know, would be the communication pattern $\{(B\times A) \union I\}$; that is also known as the (public) \emph{group  announcement} \cite{agotnesetal.jal:2010} by the agents in $B$). It is not an ordinary public announcement as in the muddy children problem \cite{mosesetal:1986}. Let there be children $a,b,c$ and atoms $m_a,m_b,m_c$ for `$a/b/c$ is muddy', and let $a$ be clean and $b$ and $c$  muddy. The father's initial announcement of $m_a \vel m_b \vel m_c$ that at least one child is muddy is already problematic, because it is not made by an agent modelled in the system. But the children subsequently announcing that they do not know whether they are muddy is, although an announcement, not all they know. That would be the announcement $K_a (m_b \et m_c) \et K_b (\neg m_a \et m_c) \et K_c (\neg m_a \et m_b)$ (where $K_a$ means `agent $a$ knows' and $\et$ is conjunction). This  removes all uncertainty in the model and restricts it to a singleton wherein it is common knowledge that $a$ is clean and $b,c$ are muddy.

This formalization of public announcement corresponds to what is known as the \emph{refinement semantics} for public announcement \cite{jfaketal.jancl:2007}, not to the (usual) domain restriction semantics \cite{plaza:1989} or the relational restriction semantics \cite{gerbrandyetal:1997}.
\end{example}

\begin{example}[Gossip]
In a \emph{gossip protocol} \cite{kermarrecetal:2007,hedetniemietal:1988} the agents communicate by peer-to-peer calls between neighbours given a network where nodes stand for agents and links between nodes connect neighbours. We propose to model this as the communication pattern $\cp = \{ab \mid a,b \in A, a \neq b\}$, where $ab:= \{(a,b),(b,a)\} \union I$. Communication graph $ab$ represents the call between $a$ and $b$.  For example, if $A = \{a,b,c,d\}$, the relation $ab$ is $\{(a,a),(a,b),(b,a),(b,b),(c,c),$ $(d,d)\}$. Given our assumption that communication happens in rounds, this is {\em synchronous gossip}.


Various modes of communication are common in gossip. The one above wherein the callers exchange all their secrets (and all they know) is called \emph{pushpull}. Other common modes are \emph{push} wherein only the caller informs the person called, and \emph{pull} wherein only the person called informs the caller. \emph{Push} is the pattern wherein call $ab:= \{(a,b)\} \union I$ and \emph{pull} is the pattern where call $ab:= \{(b,a)\} \union I$. {\em Asynchronous gossip} (for any of `pushpull', `push', and `pull') results when adding `no call' to the pattern above, that is, we then get $\cp' = \{ab \mid a,b \in A, a \neq b\} \union \{I\}$. This makes the agents uncertain if a call took place.
\end{example}

We can now better compare our work with other research. 

\emph{Resolving distributed knowledge} \cite{AgotnesW17} is a singleton communication pattern for graph $(B \times B) \union I$, where $B \subseteq A$. Note the difference with the agents in $B$ resolving what they distributedly know (not received by those not in $A$) and publicly announcing it, the graph $(B \times A) \union I$ (Ex.~\ref{ex.pa}).  When $A=B$ the two coincide.

A communication graph is a \emph{reading event} in \cite{Baltag20}. Whereas a  communication pattern is an \emph{arbitrary reading event} \cite[Sect.~6]{Baltag20}. An arbitrary reading event is a Kripke frame of which the domain elements are decorated with communication graphs and where $\cg \sim_a \cg'$ iff $\cg a = \cg' a$. There are arbitrary reading events that are not communication patterns (namely when different domain elements are decorated with the same communication graph).

In \cite{diego:2021}, that introduced communication patterns, they are represented as Kripke frames, just as in \cite{Baltag20}. They may also have, like action models, executability \emph{preconditions}, in order to formalize dynamics in non-oblivious distributed systems. 

\section{Language, semantics, and axiomatization} \label{sec.logic}

\subsection{Language and semantics}

We first define the logical language and semantics.
\begin{definition}[Language] Given $A$ and $P$, the language $\lang$ is defined by a BNF (where $p_a \in P$, $B \subseteq A$, $\cp \subseteq \powerset(A \times A)$, and $\cg \in \cp$).
\[ \phi := p_a \mid \neg \phi \mid \phi \et \phi \mid D_B \phi \mid [\cp,\cg]\phi \]
\end{definition}
Expression $D_B \phi$ is read as `the agents in $B$ have distributed knowledge of $\phi$'. We write $K_a \phi$ for $D_{\{a\}}\phi$, for `agent $a$ knows $\phi$'.  We let $\lang^-$ stand for the language without the construct  $[\cp,\cg]\phi$.

\begin{definition}[Semantics on epistemic models] Given $M = (W,\sim,L)$ and $w \in W$,  we define the \emph{satisfaction relation} $\models$ by induction on $\phi\in\lang$ (where $p \in P$, $a \in A$, $B \subseteq A$, $\cp$ a communication pattern, and $\cg\in\cp$).
\[ \begin{array}{lcl}
M,w \models p_a & \text{iff} & p_a \in L(w)\\
M,w \models \neg\phi & \text{iff} & M,w \not\models \phi\\
M,w \models \phi\et\psi & \text{iff} & M,w \models \phi \text{ and } M,w \models \psi \\
M,w \models D_B \phi & \text{iff} & M,v \models \phi \text{ for all } v \sim_B w \\
M,w \models [\cp,\cg]\phi & \text{iff} & M \odot \cp, (w,\cg) \models \phi \\
\end{array} \]
Formula $\phi$ is {\em valid} on $M$ iff for all $w \in W$, $M,w \models \phi$; and formula $\phi$ is {\em valid} iff for all pointed models $(M,w)$, $M,w \models \phi$. 
\end{definition}

The (required) locality of epistemic models causes distributed knowledge to have slightly different properties in our semantics. In the standard semantics of distributed knowledge $D_B \phi \eq \phi$ is invalid for any $B \subseteq A$. Whereas in our semantics $D_A \phi \eq \phi$ is valid although $D_B \phi \eq \phi$ for $B \subset A$ remains invalid. It is easy to show that $D_A \phi \eq \phi$ is valid by induction on the structure of $\phi$. The base case holds because for all $M$ and $w,v \in \domain(M)$, and for all local variables $p_a$: $M,w \models p_a $ iff $M,v \models p_a$, whenever $w \sim_A v$  (even when $\sim_A$ is not the identity relation). Whereas we need not have that, if  $w \sim_B v$ and $a \notin B$.

Informally, the information content of a pointed epistemic model is the set of all formulas that are true in that model. Formally, given models $M = (W,\sim,L)$ and $M'= (W',\sim',L')$, and $w \in W$, $w'\in W'$, we denote ``for all $\phi \in \lang$, $M,w \models\phi$ iff $M',w'\models\phi$'' as $(M,w) \equiv (M',w')$; we then say that $(M,w)$ and $(M',w')$ are {\em modally equivalent}. Analogously we define $(M,w) \equiv^- (M',w')$ for the language $\lang^-$. It is known that bisimilarity implies modal equivalence with respect to $\lang^-$ \cite{Roelofsen07}. This also holds for the language extended with communication patterns:

\begin{theorem} \label{theorem.xx}
$(M,w)\bisim (M',w')$ implies $(M,w)\equiv (M',w')$
\end{theorem}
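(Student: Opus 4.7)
The plan is to proceed by induction on the construction of $\phi \in \lang$, having fixed a collective bisimulation $Z$ between $M$ and $M'$ containing $(w,w')$. The atomic, negation and conjunction cases are immediate. For the distributed knowledge case $D_B \phi$ I would combine the induction hypothesis with the forth and back clauses of $Z$ applied to the nonempty set $B$ in the usual Hennessy--Milner style.

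The new case is $[\cp,\cg]\phi$, whose semantic clause reduces truth at $(M,w)$ to truth at $(M \odot \cp, (w,\cg))$, and likewise on $M'$. The strategy is to establish, as a separate preservation lemma, that collective bisimilarity is closed under communication pattern updates: given $Z \colon M \bisim M'$, the lifted relation
\[ \dot Z := \{\, ((v,\cg'),(v',\cg')) \mid (v,v') \in Z \text{ and } \cg' \in \cp \,\} \]
is a collective bisimulation between $M \odot \cp$ and $M' \odot \cp$. Once this is in hand, the inductive step for $[\cp,\cg]\phi$ follows: since $((w,\cg),(w',\cg)) \in \dot Z$, the induction hypothesis applied to the strictly smaller $\phi$ transports truth across.

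Verifying the preservation lemma is the heart of the argument. The atoms clause is immediate because $\dot L(v,\cg') = L(v)$ and analogously on $M'$. For forth, suppose $(v,\cg_1) \dot\sim_B (u,\cg_2)$ for some nonempty $B$; unfolding the definition of $\dot\sim_B$ gives $\cg_1 B \equiv \cg_2 B$ together with $v \sim_{\cg_1 a} u$ for every $a \in B$, which is equivalent to $v \sim_{B^*} u$ for $B^* := \cg_1 B$. Crucially, $B^*$ is nonempty, since reflexivity of $\cg_1$ forces $B \subseteq B^*$, so forth of $Z$ at the set $B^*$ supplies $u' \in W'$ with $(u,u') \in Z$ and $v' \sim_{B^*} u'$, whence $v' \sim_{\cg_1 a} u'$ for each $a \in B$. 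Combined with $\cg_1 B \equiv \cg_2 B$, this yields $(v',\cg_1) \dot\sim_B (u',\cg_2)$ together with $((u,\cg_2),(u',\cg_2)) \in \dot Z$. The back clause is symmetric.

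The main obstacle is precisely this unpacking: one must notice that nonemptiness of $B^*$, which is what lets the bisimulation clauses of $Z$ fire, depends essentially on the reflexivity of communication graphs, and that the $M'$-side witness is forced to reuse the same communication graph $\cg_2$ as on the $M$-side (rather than any graph with the same $B$-in-neighbourhoods), which is why $\dot Z$ pairs identical communication graphs across the two models. Once this is seen, the rest of the induction runs smoothly.
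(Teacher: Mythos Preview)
Your proposal is correct and follows essentially the same approach as the paper: lift the bisimulation $Z$ to $\dot Z$ by pairing states with identical communication graphs, verify the bisimulation clauses for $\dot Z$, and then close the induction on $\phi$ at the updated models. Your handling of the {\bf forth} clause is in fact slightly more careful than the paper's, since you explicitly apply {\bf forth} for $Z$ at $B^* = \cg_1 B$ (and note its nonemptiness via reflexivity), which is what the definition of $\dot\sim_a$ actually demands.
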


\begin{proof}
We prove by induction on $\phi$ that ``for all $\phi$, for all pointed models $(M,w)$, $(M',w')$: $M,w\models\phi$ iff $M',w'\models\phi$''. All cases are standard for logics of distributed knowledge (namely as in \cite{Roelofsen07}) except of course for the update case $[\cp,\cg]\phi$, to which we therefore restrict the proof.

We show that ``for all pointed models $(M,w)$, $(M',w')$: if $(M,w)\bisim(M',w')$ then $M,w \models [\cp,\cg]\phi$ iff $M',w' \models [\cp,\cg]\phi$''. Let $M=(W,\sim,L)$, $M'=(W',{\sim'}, {L'})$, (and further below) $M\odot\cp=(W\times\cp,\dot\sim,\dot L)$, and $M'\odot\cp=(W'\times\cp,\dot \sim', \dot L')$. We show the direction from left to right. The other direction is shown similarly.

Assume $M,w \models [\cp,\cg]\phi$. Then $M \odot \cp, (w,\cg) \models \phi$. Given $(M,w)\bisim (M',w')$, let $Z$ be a bisimulation between $M$ and $M'$ containing $(w,w')$. We claim that there is a bisimulation between $(M\odot\cp, (w,\cg))$ and $(M'\odot\cp,(w',\cg))$, namely $\dot Z$ defined as: $((w,\cg),(w',\cg')) \in \dot Z$ iff $(w,w') \in Z$ and $\cg=\cg'$.

Let now $((w,\cg),(w',\cg')) \in \dot Z$ be arbitrary. We check the requirements of bisimulation.

{\bf atoms}: This is obvious as $\dot L(w,\cg)=L(w)= L'(w')= \dot L'(w',\cg)$.

{\bf forth}:  Assume $(w,\cg) \dot\sim_B (v,S)$. Then for all $a \in B$, $(w,\cg) \dot\sim_a (v,S)$, so, by definition of $\dot\sim_a$, for all $a \in B$, $w \sim_a v$ and $\cg a = S a$. A different way to write the latter is $w \sim_B v$ and $\cg B \equiv S B$. From $w \sim_B v$ and $(w,w') \in Z$ and {\bf forth} for $B$ it follows that there is a $v' \in W'$ such that $(v,v') \in Z$ and $w' \sim'_B v'$. From $w' \sim'_B v'$ and $\cg B \equiv S B$ it then follows that $(w',\cg) \dot\sim'_B (v',S)$. From $(v,v') \in Z$ and the definition of $\dot Z$ it follows that $((v,\cg),(v',S)) \in \dot Z$.

{\bf back}: Similar to {\bf forth}.

Having established that $(M\odot\cp, (w,\cg))$ is bisimilar to $(M'\odot\cp,(w',\cg))$, we now apply the induction hypothesis for $\phi$ on pointed models $(M\odot\cp, (w,\cg))$ and $(M'\odot\cp,(w',\cg))$ thus obtaining that $M' \odot \cp, (w',\cg) \models \phi$, so that $M',w' \models [\cp,\cg]\phi$ as required.
\end{proof}
On image-finite models we also have that $(M,w)\equiv (M',w')$ implies $(M,w)\bisim$ $(M',w')$, and therefore the Hennessy-Milner property \cite{blackburnetal:2001}. (An epistemic model $M = (W, \sim,L)$ is image-finite if for all $w \in W$ and $a \in A$, $[w]_a$ is finite.) It is known that $(M,w)\equiv^- (M',w')$ (in the restricted language $\lang^-$) implies $(M,w)\bisim (M',w')$ \cite{Roelofsen07}. As $(M,w)\equiv (M',w')$ implies $(M,w)\equiv^- (M',w')$, therefore also $(M,w)\equiv (M',w')$ implies $(M,w)\bisim (M',w')$.

\begin{example}
The models $(M,w)$ and $(M',w')$ from Ex.~\ref{ex.biscolbis} (Fig.~\ref{fig.bisnoncbis}) satisfy different formulas in the logic of distributed knowledge. For example, $M,w \not\models D_{ab} p_c$ whereas $M',w' \models D_{ab} p_c$. With Th.~\ref{theorem.xx} it therefore follows (by contraposition) that they are not collectively bisimilar. This was indeed observed in Ex.~\ref{ex.biscolbis}.
\end{example}

\subsection{Axiomatization} \label{axiomatization}

We proceed with the axiomatization. The axiomatization of the logic of communication patterns is that of the logic of distributed knowledge expanded with reduction axioms $\mathbf {C}^1$---$\mathbf{C}^4$ and rule $\mathbf{N}^\odot$ involving communication patterns, the axiom {\bf L} describing that agents know their local state, and the auxiliary derivation rule {\bf RE} (replacement of equivalents). The axiomatization is displayed in Table~\ref{table}. A \emph{derivation} is a sequence of formulas such that every formula is the instantiation of an axiom, or the conclusion of an instantiation of a derivation rule where the premisses (or premiss) are prior formulas in the sequence. A formula occurring in a derivation is a \emph{theorem}.

\begin{table}[h]
\[\begin{array}{ll}
\mathbf{P} & \text{all instances of prop.\ tautologies} \\
\mathbf{L} & K_a p_a \vel K_a \neg p_a \\
\mathbf{K}^D & D_B (\phi \imp \psi) \imp D_B \phi \imp D_B \psi \\
\mathbf{T}^D & D_B \phi \imp \phi \\
\mathbf{4}^D & D_B \phi \imp D_B D_B \phi \\
\mathbf{5}^D & \neg D_B \phi \imp D_B \neg D_B \phi \\
\mathbf{W} & D_B \phi \imp D_C \phi \\
\mathbf{C}^1 & {[\cp,\cg]} p_a \eq p_a \\
\mathbf{C}^2 & {[\cp,\cg]} \neg\phi \eq  \neg [\cp,\cg] \phi \\
\mathbf{C}^3 & {[\cp,\cg]} (\phi\et\psi) \eq ([\cp,\cg] \phi \et [\cp,\cg] \psi) \\
\mathbf{C}^4 & {[\cp,\cg]} D_B \phi \eq \Et_{\cg' B \equiv \cg B} D_{\cg B} [\cp,\cg'] \phi \\ \ \\
\mathbf{MP} &  \text{From } \phi \imp \psi \text{ and } \phi \text{ infer } \psi \\
\mathbf{N}^D &  \text{From } \phi \text{ infer } D_B \phi \\
\mathbf{N}^\odot & \text{From } \phi \text{ infer } [\cp,\cg] \phi \\ 
\mathbf{RE} & \text{From } \phi \eq \psi \text{ infer } \chi[p_a/\phi] \eq \chi[p_a/\psi] \\
\end{array}\]
\caption{Axiomatization of communication pattern logic, where $B, C \subseteq A$ with $B \subseteq C$, $a \in A$, and $p_a \in P_a$. We recall that $K_a$ abbreviates $D_{\{a\}}$.}
\label{table}
\end{table}

In the derivation rule {\bf RE}, $\chi[p_a/\phi]$ stands for uniform substitution of the occurrences of atom $p_a$ in formula $\chi$ by $\phi$. The reduction axioms  for communication patterns resemble those for action models \cite{baltagetal:1998} with trivial preconditions, except for the reduction axiom for distributed knowledge after update. There is no reduction axiom for a sequence of two communication patterns. That explains the presence of the derivation rule {\bf RE}, which is not needed in the axiomatization of distributed knowledge, as it is then admissible.

The validity of all axioms and the validity preservation of all rules is obvious, except for  that of $\mathbf{C}^4$, and maybe that of $\mathbf{L}$, $\mathbf{N}^\odot$ and $\mathbf{RE}$. There are therefore shown.

\begin{proposition}[$\mathbf{L}$] \label{l}
$\models K_a p_a \vel K_a \neg p_a$.
\end{proposition}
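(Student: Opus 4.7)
The plan is to unfold the semantics of $K_a$ and use the locality requirement on epistemic models directly. Fix an arbitrary pointed epistemic model $(M,w)$ with $M=(W,\sim,L)$, and argue by cases on whether $p_a \in L(w)$.

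First, suppose $p_a \in L(w)$. Take any $v \in W$ with $w \sim_a v$. By locality, $L(w)_a = L(v)_a$, so in particular $p_a \in L(v)$, hence $M,v \models p_a$. Since $v$ was arbitrary in $[w]_a$, this gives $M,w \models K_a p_a$, and therefore $M,w \models K_a p_a \vel K_a \neg p_a$.

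Second, suppose $p_a \notin L(w)$. Again for any $v \sim_a w$, locality yields $p_a \notin L(v)$, so $M,v \models \neg p_a$, whence $M,w \models K_a \neg p_a$, and again the disjunction holds. Since $(M,w)$ was arbitrary, $K_a p_a \vel K_a \neg p_a$ is valid.

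There is no real obstacle here: the statement is essentially a direct semantic reading of the locality condition built into the definition of epistemic model, together with the fact that $K_a$ abbreviates $D_{\{a\}}$ and hence quantifies over $\sim_a$-successors. The only thing to be careful about is to recognize that $p_a \in P_a$ (so $p_a$ is indeed a local variable of $a$), which is exactly the side condition accompanying axiom $\mathbf{L}$ in Table~\ref{table}.
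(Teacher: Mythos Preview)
Your proof is correct and follows essentially the same approach as the paper: fix an arbitrary pointed model, case-split on whether $p_a \in L(w)$, and use locality to push the truth value of $p_a$ across $\sim_a$. If anything, your formulation is slightly more precise in writing $L(w)_a = L(v)_a$ rather than $L(v)=L(w)$.
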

\begin{proof}
Let epistemic model $M = (W,\sim,L)$ and state $w \in W$ be arbitrary. Let now $v \in W$ be such that $v \sim_a w$. As epistemic models are local, $L(v)=L(w)$, in other words, $p_a \in L(v)$ iff $p_a \in L(w)$. Therefore, if $M,w \models p_a$, then $M,v \models p_a$, so that $M,w \models K_a p_a$ as $v$ was arbitrary. Otherwise, if $M,w \models \neg p_a$, then  $M,v \models \neg p_a$, so that $M,w \models K_a \neg p_a$. Either way we obtain $M, w \models K_a p_a \vel K_a \neg p_a$.
\end{proof}

\begin{proposition}[$\mathbf{C}^4$] \label{prop.ddd}
$\models [\cp,\cg] D_B \phi \eq \Et_{\cg' B \equiv \cg B} D_{\cg B} [\cp,\cg'] \phi$. 
\end{proposition}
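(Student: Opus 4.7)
The plan is to unfold each side of the biconditional into its raw semantic condition on an arbitrary pointed model $(M,w)$ with $M=(W,\sim,L)$, and observe that the two conditions coincide. The only subtle point is the translation between the conjunction of individual agent clauses in the definition of $\dot\sim_B$ and the compact form $w \sim_{\cg B} v$ together with $\cg B \equiv \cg' B$.

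First I would unpack the left-hand side. By the update clause of the semantics, $M,w\models [\cp,\cg]D_B\phi$ iff $M\odot\cp,(w,\cg)\models D_B\phi$, which in turn (by the distributed-knowledge clause, applied in the updated model) means that for every $(v,\cg')\in W\times\cp$ with $(w,\cg)\dot\sim_B (v,\cg')$ we have $M\odot\cp,(v,\cg')\models\phi$, equivalently $M,v\models [\cp,\cg']\phi$.

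The key step is then to rewrite $(w,\cg)\dot\sim_B(v,\cg')$. Since $\dot\sim_B=\bigcap_{a\in B}\dot\sim_a$ and $(w,\cg)\dot\sim_a(v,\cg')$ iff $w\sim_{\cg a}v$ and $\cg a=\cg' a$, the condition $(w,\cg)\dot\sim_B(v,\cg')$ is equivalent to the conjunction of ``for all $a\in B$, $w\sim_{\cg a}v$'' and ``for all $a\in B$, $\cg a=\cg' a$''. The second conjunct is precisely $\cg B\equiv\cg' B$. For the first conjunct, because $\cg B=\bigcup_{a\in B}\cg a$, we have $\sim_{\cg B}=\bigcap_{c\in\cg B}\sim_c=\bigcap_{a\in B}\bigcap_{c\in \cg a}\sim_c=\bigcap_{a\in B}\sim_{\cg a}$, so ``for all $a\in B$, $w\sim_{\cg a}v$'' is equivalent to $w\sim_{\cg B}v$. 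Thus the LHS condition becomes: for all $\cg'\in\cp$ with $\cg' B\equiv\cg B$ and all $v\in W$ with $w\sim_{\cg B}v$, $M,v\models[\cp,\cg']\phi$.

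Finally I would unfold the right-hand side directly: $M,w\models\bigwedge_{\cg' B\equiv\cg B} D_{\cg B}[\cp,\cg']\phi$ iff for every $\cg'\in\cp$ with $\cg' B\equiv\cg B$ and every $v\in W$ with $w\sim_{\cg B}v$, $M,v\models[\cp,\cg']\phi$. This is literally the condition obtained above for the LHS, so the biconditional holds. The main obstacle is the bookkeeping in the middle step: one must carefully distinguish $\cg B\equiv\cg' B$ from the (weaker) set equality $\cg B=\cg' B$ that the paper explicitly warns about, and must recognize that the relation $\sim_{\cg B}$ on $M$ correctly captures the $B$-indexed intersection of the relations $\sim_{\cg a}$ that appear pointwise in the definition of $\dot\sim_B$.
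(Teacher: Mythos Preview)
Your proposal is correct and follows essentially the same approach as the paper: both unfold the update and distributed-knowledge clauses, rewrite $(w,\cg)\dot\sim_B(v,\cg')$ as the conjunction of $w\sim_{\cg B}v$ and $\cg B\equiv\cg' B$, and then read off the right-hand side. If anything, you supply slightly more justification than the paper does for the equivalence $\bigcap_{a\in B}\sim_{\cg a}=\sim_{\cg B}$, and you correctly flag the $\equiv$ versus $=$ distinction.
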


\begin{proof}
Let $M = (W,\sim,L)$ and $w \in W$. Then: 

\bigskip

\noindent $
M,w \models [\cp,\cg] D_B \phi \quad \\
\Eq \text{(semantics of communication patterns)} \\
M \odot \cp, (w,\cg) \models D_B \phi \quad \\
\Eq \text{(semantics of distributed knowledge)}  \\
\forall w' \in W, \forall \cg' \in \cp:  (w,\cg) \sim_B (w',\cg') \Imp M \odot \cp, (w',\cg') \models \phi \quad \\
\Eq \\
\forall w' \in W, \forall \cg' \in \cp: (\forall a \in B: (w,\cg) \dot\sim_a (w',\cg')) \Imp M \odot \cp ,(w',\cg') \models \phi \quad \\
\Eq \\
\forall w' \in W, \forall \cg' \in \cp: (\forall a \in B\!: \! w \sim_{\cg a} w' \ \& \ \cg a = \cg' a) \!\Imp\! M \odot \cp ,(w',\cg') \models \phi \  \\
\Eq \\
\forall w' \in W, \forall \cg' \in \cp: (\forall a \in B: w \sim_{\cg a} w' \ \& \ \forall a \in B: \cg a = \cg' a) \Imp M \odot \cp ,(w',\cg') \models \phi \quad \\
\Eq \text{($\cg B \equiv \cg' B$ is defined as ``for all $a \in B$, $\cg a = \cg' a$'')} \\
\forall w' \in W, \forall \cg' \in \cp: (w \sim_{\cg B} w' \ \& \ \cg B \equiv \cg' B) \Imp M \odot \cp ,(w',\cg') \models \phi \quad \\
\Eq \\
\forall \cg'\in\cp : \cg B \equiv \cg' B \Imp (\forall w'\in W: w \sim_{\cg B} w' \Imp M \odot \cp, (w,\cg') \models \phi) \quad \\
\Eq \text{(semantics of communication patterns)} \\
\forall \cg'\in\cp : \cg B \equiv \cg' B \Imp (\forall w'\in W: w \sim_{\cg B} w' \Imp M,w \models [\cp,\cg'] \phi) \quad \\
\Eq  \text{(semantics of distributed knowledge)} \\
\forall \cg'\in\cp : \cg B \equiv \cg' B \Imp M,w \models D_{\cg B} [\cp,\cg'] \phi \quad \\
\Eq \\
M,w \models \Et_{\cg' B \equiv \cg B} D_{\cg B} [\cp,\cg'] \phi 
$
\end{proof}

The interaction between communication patterns and distributed knowledge of Prop.~\ref{prop.ddd} is reminiscent of \cite[Prop.\ 5]{WangA13}, it is similar to \cite[Prop.\ 4.6]{Baltag20}, and also somewhat similar to \cite[Prop.\ 2, items 6 \& 7]{AgotnesW17}. 

In \cite[Prop.~5]{WangA13}, axiom $[\phi]D_B\psi \eq (\phi \imp D_B [\phi]\psi)$ describes the interaction between public announcement $[\phi]$ and distributed knowledge. A public announcement wherein a group $B \subseteq A$ of agents simultaneously reveal all they know corresponds to the communication pattern we also called public announcement (Ex.~\ref{ex.pa}). 

In \cite{Baltag20}, axiom $[!\alpha] D_B \phi \eq D_{\alpha(B)}[!\alpha]\phi$ (where $!\alpha$ is a {\em reading map}) is the special case of $\mathbf{C}^4$ for a singleton communication pattern, which would be: $[\cp,\cg] D_B \phi \eq D_{\cg B} [\cp,\cg] \phi$. This is not surprising, as their axiom inspired our slightly more general axiom.

In \cite{AgotnesW17} we find the validities $R_B D_C \phi \eq D_{B \union C}R_B\phi$ for $B \inter C \neq \emptyset$, and $R_B D_C \phi \eq D_C R_B\phi$ for $B \inter C = \emptyset$. They call $R_B$ the {\em resolution operator}. It can also be seen as a special case of the reading map $!\alpha$ in \cite{Baltag20}.

\begin{proposition}[$\mathbf{N}^\odot$] \label{nodot} 
If $\models \phi$, then $\models [\cp,\cg] \phi$. 
\end{proposition}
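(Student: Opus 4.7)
The plan is straightforward: unfold the semantics of $[\cp,\cg]\phi$ to reduce the claim to an instance of the validity of $\phi$ itself, applied to the updated model. The only point that requires actual verification is that the update $M \odot \cp$ is a bona fide epistemic model in the sense of our definition (in particular, locality), so that $\phi$ can be instantiated on it.

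First I would fix an arbitrary pointed epistemic model $(M, w)$ with $M = (W, \sim, L)$, and observe that by the semantic clause for dynamic modalities,
\[ M, w \models [\cp,\cg]\phi \quad\text{iff}\quad M \odot \cp,(w,\cg) \models \phi. \]
Thus it suffices to show that $(w,\cg)$ is a genuine pointed model of $M \odot \cp$ and then invoke the assumption that $\models \phi$.

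The one small verification is that $M \odot \cp$ is an epistemic model. Each $\dot\sim_a$ is an equivalence relation, being the conjunction of $\sim_{\cg a}$ (an intersection of equivalences on $W$) with the equivalence relation on $\cp$ defined by $\cg a = \cg' a$. For locality: if $(w,\cg) \dot\sim_a (w',\cg')$, then by definition $w \sim_{\cg a} w'$, and since $a \in \cg a$ by reflexivity of communication graphs, in particular $w \sim_a w'$, so locality of $M$ gives $L(w)_a = L(w')_a$; since $\dot L(w,\cg) = L(w)$ and $\dot L(w',\cg') = L(w')$, we conclude $\dot L(w,\cg)_a = \dot L(w',\cg')_a$, as required.

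With $M \odot \cp$ recognized as a legitimate epistemic model and $(w,\cg) \in \domain(M \odot \cp)$, the hypothesis $\models \phi$ applied to the pointed model $(M \odot \cp, (w,\cg))$ yields $M \odot \cp, (w,\cg) \models \phi$, hence $M, w \models [\cp,\cg]\phi$. Since $(M,w)$ was arbitrary, $\models [\cp,\cg]\phi$. There is no real obstacle here; the argument is essentially the standard necessitation-from-validity pattern for DEL-style dynamic modalities, with the only extra content being the sanity check that the update construction respects locality.
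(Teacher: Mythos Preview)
Your proof is correct and follows essentially the same route as the paper: unfold the semantics and apply validity of $\phi$ at the updated pointed model. You add a verification that $M \odot \cp$ is itself a (local) epistemic model, which the paper's proof leaves implicit; this extra care is appropriate and does not change the approach.
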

\begin{proof}
In order to show that $\models [\cp,\cg] \phi$, let $M = (W,\sim,L)$ and $w \in W$ be given. We wish to show that $M,w \models [\cp,\cg] \phi$. By definition of the semantics, this is equivalent to $M \odot \cp, (w,\cg) \models \phi$. As we assumed that $\phi$ is valid, this holds.
\end{proof}

\begin{proposition}[$\mathbf{RE}$] \label{re}
If $\models \phi \eq \psi$, then $\models \chi[p_a/\phi] \eq \chi[p_a/\psi]$.
\end{proposition}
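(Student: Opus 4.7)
The plan is to prove $\models \chi[p_a/\phi] \eq \chi[p_a/\psi]$ by induction on the structure of $\chi$, using the assumption $\models \phi \eq \psi$ as the hypothesis that drives the base case and the induction hypothesis to push through the connectives and modalities.

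For the base case $\chi = q_b$, one splits on whether $q_b = p_a$: if so, both sides of the substitution become $\phi$ and $\psi$ respectively, and the conclusion is exactly the assumption; if not, $\chi[p_a/\phi] = q_b = \chi[p_a/\psi]$ and the equivalence is trivially valid. The cases $\chi = \neg\chi'$ and $\chi = \chi_1 \et \chi_2$ are immediate from the induction hypothesis plus the truth-functional semantics of $\neg$ and $\et$.

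The two remaining cases are the modal ones. For $\chi = D_B \chi'$, the induction hypothesis states that $\chi'[p_a/\phi] \eq \chi'[p_a/\psi]$ holds at every state of every epistemic model. Hence for any pointed model $(M,w)$ and any $v$ with $w \sim_B v$, we have $M,v \models \chi'[p_a/\phi]$ iff $M,v \models \chi'[p_a/\psi]$, so $M,w \models D_B\chi'[p_a/\phi]$ iff $M,w \models D_B\chi'[p_a/\psi]$, which is what is required. For $\chi = [\cp,\cg]\chi'$, note first that substitution commutes with the dynamic modality, so $([\cp,\cg]\chi')[p_a/\phi] = [\cp,\cg](\chi'[p_a/\phi])$, and similarly on the right. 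Then by the semantic clause $M,w \models [\cp,\cg]\xi$ iff $M\odot\cp,(w,\cg) \models \xi$, applied twice, and by the induction hypothesis applied at the pointed model $(M\odot\cp,(w,\cg))$, the two sides are equivalent.

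No step is really an obstacle: once one observes that $\models \phi \eq \psi$ is global validity (equivalence at every state of every model, including updated ones), every modal case collapses to evaluating the induction hypothesis at the appropriate successor or updated state. The only point worth flagging is that in the $[\cp,\cg]$ case one implicitly uses that substitution of a propositional variable $p_a$ commutes with the dynamic modality $[\cp,\cg]$, which is immediate from the definition of substitution (the modality contains no occurrence of $p_a$ to interfere with).
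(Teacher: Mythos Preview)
Your proof is correct and follows essentially the same approach as the paper: induction on the structure of $\chi$, with the two modal cases $D_B\chi'$ and $[\cp,\cg]\chi'$ handled by pushing the substitution inside the modality and applying the induction hypothesis at the accessible or updated state. The paper omits the base case and the Boolean cases as routine and presents only the two modal cases, but the argument is the same.
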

\begin{proof}
This is proved by induction on $\chi$. The cases of interest are $\chi = (D_B \chi')[p_a/\phi]$ and $\chi = ([\cp,\cg]\chi')[p_a/\phi]$. 

\bigskip

\noindent $
M,w \models (D_B \chi')[p_a/\phi] \\
\Eq \text{(apply uniform substitution)} \\
M,w \models D_B \chi'[p_a/\phi] \\
\Eq \text{(semantics of distributive knowledge)}  \\
M,v \models \chi'[p_a/\phi] \text{ for all } v \sim_B w \\
\Eq \text{(inductive hypothesis)} \\
M,v \models \chi'[p_a/\psi] \text{ for all } v \sim_B w \\
\Eq \text{(similar to the first two steps above, but in the other direction)}  \\
M,w \models (D_B \chi')[p_a/\psi]$

\bigskip
\medskip

\noindent $
M,w \models ([\cp,\cg]\chi')[p_a/\phi] \\
\Eq \text{(apply uniform substitution)} \\
M,w \models [\cp,\cg]\chi'[p_a/\phi] \\
\Eq \text{(semantics of communication patterns)}  \\
M \odot \cp, (w,\cg) \models \chi'[p_a/\phi] \\
\Eq \text{(inductive hypothesis)} \\
M \odot \cp, (w,\cg) \models \chi'[p_a/\psi] \\
\Eq \text{(similar to the first two steps above, but in the other direction)}  \\
M,w \models ([\cp,\cg]\chi')[p_a/\psi]$
\end{proof}

The logic of communication patterns does not satisfy the so-called {\em  substitution property} (from $\phi$ infer $\phi[p/\psi]$ for any $\psi$). This is because some axioms feature propositional variables instead of arbitrary formulas. For a simple example, $K_a p_a \vel K_a \neg p_a$ is valid, but $K_a p_b \vel K_a \neg p_b$ for $b \neq a$ is invalid.

\begin{theorem} \label{prop.bal}
The axiomatization of communication pattern logic in Table~\ref{table} is sound and complete.
\end{theorem}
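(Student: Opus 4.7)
The plan is the standard reduction strategy for dynamic epistemic logics: reduce every formula of $\lang$ to a provably equivalent formula of $\lang^-$, then invoke completeness of the static fragment over local models.

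Soundness follows by inspection of each axiom and rule. The S5-schemas $\mathbf{K}^D$--$\mathbf{5}^D$ and necessitation $\mathbf{N}^D$ are valid because each $\sim_B = \Inter_{a\in B}\sim_a$ is an equivalence relation; the monotonicity axiom $\mathbf{W}$ is valid because $B\subseteq C$ gives $\sim_C \subseteq \sim_B$; and $\mathbf{C}^1$--$\mathbf{C}^3$ are immediate from the definition of $M \odot \cp$, whose valuation is copied from $M$ and whose satisfaction relation commutes with the boolean connectives. The remaining validities $\mathbf{L}$, $\mathbf{C}^4$, $\mathbf{N}^\odot$, and $\mathbf{RE}$ are precisely Propositions \ref{l}, \ref{prop.ddd}, \ref{nodot}, and \ref{re}.

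For completeness, the core step is to define a complexity measure $c:\lang\to\Nat$ on which each reduction axiom strictly decreases from left to right. I take $c(p_a)=1$, $c(\neg\phi)=c(\phi)+1$, $c(\phi\et\psi)=\max(c(\phi),c(\psi))+1$, $c(D_B\phi)=c(\phi)+1$, and $c([\cp,\cg]\phi)=K\cdot c(\phi)$ for a constant $K=|A|^2+2$; this $K$ is large enough to dominate the (at most $|\cp|\leq 2^{|A|^2}$) conjuncts in $\mathbf{C}^4$, parsed as a balanced binary tree. From $c$ we build a translation $t:\lang\to\lang^-$ by recursion: for atoms, negations, conjunctions, and $D_B$-formulas the translation commutes with the connective; for dynamic formulas $[\cp,\cg]\phi$ with $\phi$ not itself of dynamic shape we apply the appropriate $\mathbf{C}^i$-axiom to push the modality inwards, using smaller-$c$ translations for the subformulas. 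The only delicate case is iteration $[\cp,\cg][\cp',\cg']\phi$, where no reduction axiom applies directly: here one first invokes the induction hypothesis on $[\cp',\cg']\phi$ to obtain $\phi' \in \lang^-$ with $\vdash [\cp',\cg']\phi \eq \phi'$, then applies $\mathbf{RE}$ to derive $\vdash [\cp,\cg][\cp',\cg']\phi \eq [\cp,\cg]\phi'$, a formula whose $c$-value is strictly smaller, and continues the recursion. This yields $\vdash \phi \eq t(\phi)$ for every $\phi \in \lang$.

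It remains to establish completeness for the static fragment $\lang^-$ over local epistemic models. The axioms $\mathbf{P}$, $\mathbf{K}^D$--$\mathbf{5}^D$, $\mathbf{W}$ together with rules $\mathbf{MP}$ and $\mathbf{N}^D$ completely axiomatize the logic of distributed knowledge over S5 Kripke models \cite{Roelofsen07}, via a canonical pseudo-model construction that folds a canonical Kripke model (with accessibility relations indexed by all $B \subseteq A$) so as to enforce $\sim_B = \Inter_{a\in B}\sim_a$. The new ingredient is locality: we must produce a model in which $v \sim_a w$ implies $L(v)_a = L(w)_a$. Axiom $\mathbf{L}$ handles this at the canonical level, since $K_a p_a \vel K_a \neg p_a \in \Gamma$ forces $p_a \in \Gamma \Eq p_a \in \Gamma'$ for every $\sim_a$-accessible maximal consistent set $\Gamma'$, so the canonical construction is automatically local. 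I expect the main technical obstacle to be verifying that Roelofsen's folding operation preserves locality; this should go through by direct inspection of the construction, but it is the one place where more than routine bookkeeping is needed.
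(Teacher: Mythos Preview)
Your approach matches the paper's: reduce every $\lang$-formula to $\lang^-$ via $\mathbf{C}^1$--$\mathbf{C}^4$ and $\mathbf{RE}$, then invoke completeness of distributed knowledge together with axiom $\mathbf{L}$ for the class of local models. The paper organizes the termination argument differently, however: rather than a global complexity measure $c$, it performs an \emph{inside-out} reduction, selecting an innermost dynamic subformula $[\cp,\cg]\psi$ (so that $\psi\in\lang^-$), rewriting it to some $\psi'\in\lang^-$ using the $\mathbf{C}$-axioms, substituting via $\mathbf{RE}$, and inducting on the number of dynamic-modality occurrences. This sidesteps the one soft spot in your argument. In your iteration case you assert that $c([\cp,\cg]\phi')$ is strictly smaller than $c([\cp,\cg][\cp',\cg']\phi)$, which amounts to $c\big(t([\cp',\cg']\phi)\big)<c([\cp',\cg']\phi)$; but nowhere have you established that the translation $t$ strictly decreases $c$ on formulas containing dynamic modalities. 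The inequality is in fact true and can be carried along as part of the induction hypothesis, but it is not free---it needs to be proved simultaneously with the well-definedness of $t$. The paper's count-of-modalities induction avoids this bookkeeping entirely, since replacing one innermost $[\cp,\cg]\psi$ by $\psi'\in\lang^-$ manifestly drops the count by exactly one.

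On the static side you are more explicit than the paper, which records only ``lack of interference with locality axiom $\mathbf{L}$.'' Your observation that $\mathbf{L}$ forces the canonical pseudo-model to be local is correct, and the unravelling step preserves locality since the copied worlds inherit their valuations from the original maximal consistent sets; so your anticipated obstacle is indeed routine.
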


\begin{proof}
The soundness follows from the literature on distributed knowledge (see, e.g., \cite{handbookintro:2015}) and from the above Propositions~\ref{l}, \ref{prop.ddd}, \ref{nodot}, and \ref{re}.

The completeness follows from $(i)$ the completeness of the logic of distributed knowledge  \cite{Roelofsen07,WangA13,handbookintro:2015}, $(ii)$ the admissibility of the derivation rule {\bf RE} in that axiomatization (as in \cite{AgotnesW17}, for a similar setting), $(iii)$ termination of an inside-out reduction showing that formulas containing communication pattern modalities are provably equivalent to formulas without communication patterns (where {\bf RE} is essential, as explained in general in \cite{WC13}), and $(iv)$ lack of interference with locality axiom {\bf L}. 
\end{proof}
Let us sketch the role of the terminating reduction in determining whether a given formula in $\phi\in\lang$ is a theorem:

If $\phi$ does not contain communication pattern modalities, determine whether it is a theorem in the logic of distributed knowledge. Otherwise, take an innermost subformula of $\phi$ of shape $[\cp,\cg]\psi$ (that is, a subformula  $[\cp,\cg]\psi$ of $\phi$ such that $\psi\in \lang^-$). Determine a formula $\psi'\in\lang^-$ such that $\psi' \eq [\cp,\cg]\psi$ is a theorem. This is done by way of applying the reduction axioms and necessitation for communication patterns, and epistemic and propositional axioms and rules.\footnote{The axioms $\mathbf{C}^2$---$\mathbf{C}^4$ all have a shape where on the left-hand side the communication pattern binds a formula of higher complexity than the formula bound by the communication pattern on the right-hand side, whereas in $\mathbf{C}^1$ the communication pattern has disappeared on the right-hand side. By successively applying these axioms we can `make a communication pattern modality disappear'. An example is: $[\cp,\cg](p_a \et \neg p_b)$, iff $[\cp,\cg]p_a \et [\cp,\cg]\neg p_b$, iff $p_a \et [\cp,\cg]\neg p_b$, iff $p_a \et \neg [\cp,\cg] p_b$, iff $p_a \et \neg p_b$.} Let now $\phi'$ be $\phi$ wherein $[\cp,\cg]\psi$ is substituted by $\psi'$. This is an application of {\bf RE}. Formula $\phi'$ contains one fewer communication pattern modality. We now use induction on the number of such modalities in the formula, and thus obtain a $\phi'' \in \lang^-$ provably equivalent to $\phi \in \lang$. Then, determine where $\phi''$ is a theorem in the logic of distributed knowledge.

More formal details could be given. However, the completeness result is a generalization of the completeness of the similar logic of resolving distributed knowledge in \cite{AgotnesW17}, and a generalization of the completeness of the similar logic with reading map modalities in \cite{Baltag20}, and also a special case of the conjectured  completeness of the logic with arbitrary reading events in \cite{Baltag20}. Rather than presenting our complete axiomatization as an original result, with exhaustive proofs, we credit it to the authors of \cite{AgotnesW17} and of \cite{Baltag20}, and refer to their proof details.

Locality axioms expressing that agents know their local variables do not occur in \cite{AgotnesW17,Baltag20} but are common for distributed systems. For example, an interpreted system \cite{faginetal:1995}, wherein indistinguishability is determined by the agent's local state, satisfies locality. 

We provided an inside-out reduction instead of an outside-in reduction because the composition of two communication patterns may not be a communication pattern. Such questions concerning update expressivity are succinctly discussed in the final Sect.~\ref{sec.comparison}, also in relation to action models \cite{baltagetal:1998}, that are closed under composition. A similar observation is made concerning resolving distributed knowledge: `there does not seem to be a reduction axiom in this case' \cite[Sect.~3.1.2]{AgotnesW17}, that is, for sequentially resolving distributed knowledge for two different subsets $B,C \subseteq A$ of the set of all agents. 

As the complete axiomatization of communication pattern logic is a reduction system, from Theorem~\ref{prop.bal} already indirectly follows Theorem~\ref{theorem.xx} that collective bisimilarity implies modal equivalence {\em in the extended language}. We prove this by contraposition:

Assume $(M,w) \not\equiv(M',w')$. Then there is a $\phi\in\lang$ such that $M,w \models\phi$ and $M',w'\not\models\phi$. As  $\phi$ is equivalent to a $\phi'\in\lang^-$, also $M,w \models\phi'$ and $M',w'\not\models\phi'$, so that $(M,w) \not\equiv^-(M',w')$. Collective bisimilarity implies modal equivalence in $\lang^-$ \cite{Roelofsen07}. Therefore, using contraposition on that, $(M,w)\not\hspace{-.2cm}\bisim(M',w')$.

\section{Simplicial complexes and communication patterns} \label{sec.simp}

In this section we update simplicial models with communication patterns. First, we need to introduce standard terminology around simplicial complexes.

\subsection{Simplicial models and their updates}
Assume agents $A$ and local variables $P \subseteq P'\times A$ as before.

\begin{definition}[Simplicial complex]
Given a set of \emph{vertices} $V$, a \emph{(simplicial) complex} $C$ is a set of nonempty finite subsets of $V$, called \emph{simplices}, that is closed under subsets (such that for all $X \in C$, $Y \subseteq X$ implies $Y \in C$), and that contains all singleton subsets of $V$. 
\end{definition} 
If $Y \subseteq X$ we say that $Y$ is a \emph{face} of $X$. A maximal simplex in $C$ is a \emph{facet}. The facets of a complex $C$ are denoted as $\FF(C)$, and the vertices of a complex $C$ are denoted as $\VV(C)$. 
The \emph{dimension} of a simplex $X$ is $|X|-1$. The dimension of a complex is the maximal dimension of its facets. A simplicial complex is \emph{pure} if all facets have the same dimension. 


We decorate the vertices of simplicial complexes with agent's names, that we often refer to as  \emph{colours}. A \emph{chromatic map} $\chi: \VV(C) \imp A$ assigns colours to vertices such that different vertices of the same simplex are assigned different colours.  Thus, $\chi(v)=a$ denotes that the vertex $v$ belongs to agent $a$. For any simplex $X \in C$, $\chi(X) := \{ \chi(v) \mid v \in X \}$. 
A pair $(C,\chi)$ consisting of a simplicial complex $C$ and a chromatic map $\chi$ is a \emph{chromatic simplicial complex}. From now on, all simplicial complexes will be pure chromatic simplicial complexes. 
%
%
We also decorate the $a$-coloured vertices of simplicial complexes  with subsets of $P_a$.  \emph{Valuations} (valuation functions) assigning sets of local variables for agents $a$ to vertices coloured $a$ are denoted $\ell, \ell', \dots$ So, $\ell(v) \subseteq P_{\chi(v)}$.
 For any $X \in C$, $\ell(X)$ stands for $\Union_{v \in X} \ell(v)$. 
\begin{definition}[Simplicial model]
A \emph{simplicial model} $\C$ is a triple $(C,\chi,\ell)$ where $C$ is a simplicial complex, $\chi$ is a chromatic map, and $\ell$ is a valuation function. A {\em pointed simplicial model} is a pair $(\C,X)$ with $X \in C$.
\end{definition} 

\begin{definition}[Update of a simplicial model]
Given a simplicial model $\C=(C,\chi,\ell)$ and a communication pattern $\cp$, the update $\C \oslash \cp = (\ddot C, \ddot \chi, \ddot \ell)$ of $\C$ with $\cp$ is defined as follows.  Each facet $Y \in \FF(C)$ and communication graph $\cg \in \cp$ determines a facet denoted $Y_\cg$ of $\ddot C$ that is defined as $Y_\cg := \{ (v,X) \mid v \in Y, X \subseteq Y, \chi(X) = \cg \chi(v) \}$. The complex $\ddot C$ consists of the faces of these facets. Note that this also determines the vertices $\VV(\ddot C)$. For any $(v,X)\in\VV(\ddot C)$ we then define that $\ddot\chi(v,X)= \chi(v)$ and $\ddot\ell(v,X)=\ell(v)$.
\end{definition}
We will give examples of updates of simplicial models after introducing bisimulation and the semantics.

\subsection{Collective bisimulation between simplicial models}

We now define collective bisimulation between simplicial models, as a generalization of the bisimulation between simplicial models proposed in \cite{GoubaultLR21,ledent:2019,hvdetal.simpl:2022}. The latter has {\bf forth} and {\bf back} only for individual agents $a \in A$, as in standard bisimulation.

\begin{definition}[Collective bisimulation, simplicial]
Let simplicial models $\C = (C,\chi,\ell)$ and $\C' = (C',\chi',\ell')$ be given. A nonempty relation $\RR$ between $\FF(C)$ and $\FF(C')$ is a (collective) \emph{bisimulation} between $\C$ and $\C'$, notation $\RR: \C \bisim \C'$, iff for all $Y \in \FF(C)$ and $Y' \in \FF(C')$ with $(Y,Y') \in \RR$: 
\begin{itemize}
\item {\bf atoms}: $\ell(Y)=\ell'(Y')$. 
\item {\bf forth}: for all nonempty $B \subseteq A$, if $Z \in \FF(C)$ and $B \subseteq \chi(Y \inter Z)$ there is a $Z' \in \FF(C')$ with $B \subseteq \chi'(Y' \inter Z')$ such that $(Z,Z') \in \RR$.
\item {\bf back}: for all nonempty $B \subseteq A$, if $Z' \in \FF(C')$ and $B \subseteq \chi'(Y' \inter Z')$ there is a $Z \in \FF(C)$ with $B \subseteq \chi(Y \inter Z)$ such that $(Z,Z') \in\RR$.
\end{itemize}
If there is a bisimulation between $\C$ and $\C'$ we write $\C \bisim \C'$ and if there is one  between $\C$ and $\C'$ containing $(X,X')$ we write $(\C,X) \bisim (\C',X')$. 
\end{definition}

Also on simplicial models, collective bisimulation is different from standard bisimulation. For standard bisimulation, replace all `$B \subseteq$' by `$a \in$' in {\bf forth} and {\bf back}. Our notion therefore generalizes that of \cite{ledent:2019,goubaultetal_postdali:2021}. 

\begin{example}[Bisimilar but not collectively bisimilar, revisited] \label{ex.bcb}
We recall Example~\ref{ex.biscolbis} (Figure~\ref{fig.bisnoncbis}) wherein agents $a,b$ are uncertain about the value of agent $c$. The same information is represented in Fig.~\ref{zxcvzxcv} as simplicial models $\C$ and $\C'$. As there are three agents, a state in an epistemic model corresponds to a triangle, and if two states are indistinguishable for an agent $a$ then the corresponding triangles intersect in an $a$-vertex.

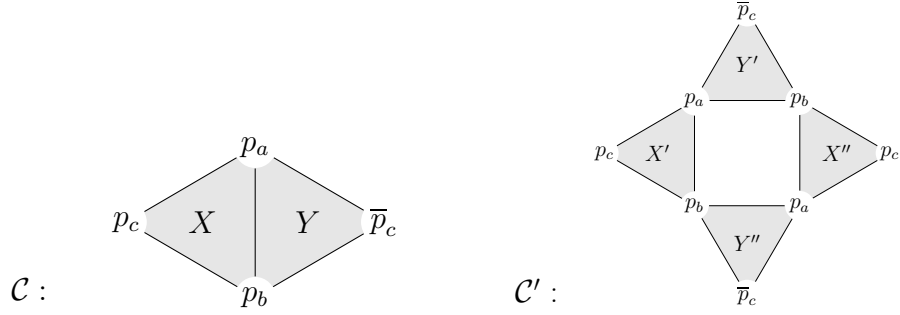
\begin{figure}[h]
\center
\begin{tikzpicture}[round/.style={circle,fill=white,inner sep=1}]
\fill[fill opacity = 0.1] (2,0) -- (2,2) -- (3.71,1) -- cycle;
\fill[fill opacity = 0.1] (0.29,1) -- (2,0) -- (2,2) -- cycle;
\node[round] (cc) at (-1,.1) {$\C:$};
\node[round] (b1) at (.29,1) {$p_c$};
\node[round] (b0) at (3.71,1) {$\np_c$};
\node[round] (c1) at (2,2) {$p_a$};
\node[round] (a0) at (2,0) {$p_b$};
\node (f0) at (1.3,1) {$X$};
\node (f2) at (2.7,1) {$Y$};
\draw[-] (b1) -- (a0);
\draw[-] (b1) -- (c1);
\draw[-] (a0) -- (b0);
\draw[-] (b0) -- (c1);
\draw[-] (a0) -- (c1);
\end{tikzpicture}
%
\hspace{1cm}
\scalebox{.7}{
\begin{tikzpicture}[round/.style={circle,fill=white,inner sep=1}]
\fill[fill opacity = 0.1] (4,0) -- (4,2) -- (5.71,1) -- cycle;
\fill[fill opacity = 0.1] (0.29,1) -- (2,0) -- (2,2) -- cycle;
\fill[fill opacity = 0.1] (2,2) -- (4,2) -- (3,3.71) -- cycle;
\fill[fill opacity = 0.1] (2,0) -- (4,0) -- (3,-1.71) -- cycle;
\node[round] (cc) at (-1,-1.61) {\Large $\C':$};
\node[round] (b1) at (.29,1) {$p_c$};
\node (f0) at (1.3,1) {$X'$};
\node (f0) at (4.7,1) {$X''$};
\node (f0) at (3,2.7) {$Y'$};
\node (f0) at (3,-.7) {$Y''$};
\node[round] (rb0) at (5.71,1) {$p_c$};
\node[round] (c1) at (2,2) {$p_a$};
\node[round] (a0) at (2,0) {$p_b$};
\node[round] (rc1) at (4,2) {$p_b$};
\node[round] (ra0) at (4,0) {$p_a$};
\draw[-] (b1) -- (a0);
\draw[-] (b1) -- (c1);
\draw[-] (ra0) -- (rb0);
\draw[-] (rb0) -- (rc1);
\draw[-] (a0) -- (c1);
\draw[-] (ra0) -- (rc1);
\node[round] (ac1) at (3,3.71) {$\np_c$};
\node[round] (bc1) at (3,-1.71) {$\np_c$};
\draw[-] (c1) -- (ac1);
\draw[-] (a0) -- (bc1);
\draw[-] (rc1) -- (ac1);
\draw[-] (ra0) -- (bc1);
\draw[-] (c1) -- (rc1);
\draw[-] (a0) -- (ra0);
\end{tikzpicture}
}
\caption{Simplicial models that are bisimilar but not collectively bisimilar}
\label{zxcvzxcv}
\end{figure}
These simplicial models are standardly bisimilar but not collectively bisimilar. The standard bisimulation is the relation $\RR = \{(X,X'),(X,X''),(Y,Y'),$ $(Y,Y'')\}$. Relation $\RR$ is not a collective bisimulation: 

Consider pair $(X,X') \in \RR$. Let us try to establish {\bf forth} for $\{a,b\}$. In $\C$, $\{a,b\} \subseteq X \inter Y$, but there is no facet $W$ in $\C'$ such that $\{a,b\} \subseteq X' \inter W$ and $(Y,W) \in \RR$. The only candidate would be $W=X'$, for which {\bf atoms} fails, as $Y$ and $X'$ have a different value for $p_c$.
%
%
\end{example}

\subsection{Semantics}

The semantics of the same language $\lang$ as in prior sections are now as follows. 
\begin{definition}[Semantics on simplicial models]
The interpretation of a formula $\phi\in \lang$ in a facet $X \in \FF(C)$ of a given simplicial model $\C = (C,\chi, \ell)$ is by induction on the structure of $\phi$. 
\[ \begin{array}{lcl}
\C,X \models p_a & \text{iff} & p_a \in \ell(X) \\
\C,X \models \neg\phi & \text{iff} & \C,X \not\models \phi \\
\C,X \models \phi\et\psi & \text{iff} & \C,X \models \phi \text{ and } \C,X \models \psi \\
\C,X \models D_B \phi & \text{iff} & \C,Y \models \phi \text{ for all } Y \in \FF(C) \text{ with } B \subseteq \chi(X \inter Y) \\
\C,X \models [\cp,\cg]\phi & \text{iff} & \C \oslash \cp, X_\cg \models \phi
\end{array} \]
Validity and modal equivalence on simplicial models are defined similarly to that on epistemic models. Formula $\phi$ is {\em valid} iff for all $(\C,X)$ we have that $\C,X \models \phi$; and given $(\C,X)$ and $(\C',X')$, by $(\C,X) \equiv (\C',X')$ we mean that for all $\phi \in \lang$: $\C,X \models \phi$ iff $\C',X' \models \phi$.
\end{definition} 

We now present examples of simplicial models, how to interpret distributed knowledge in them, how they informally correspond to epistemic models, and how to update them with communication patterns. Subsequently we show how epistemic models and simplicial models, and such updates, formally correspond. 

\begin{example}
In Fig.~\ref{zxcvzxcv} we have that $\C,X \not \models D_{ab} p_c$ whereas $
\C', X' \models D_{ab} p_c$. The latter holds because the only facet $W$ in $\C'$ with  $\{a,b\} \subseteq X' \inter W$ is $W=X'$, and $\C',X' \models p_c$.
\end{example}

\begin{example} \label{example.simpp}
We consider models for three agents $a,b,c$, (possibly) uncertain about atoms $p_a,p_b,p_c$, respectively. Fig.~\ref{nogeenfiguur} depicts the epistemic models and corresponding simplicial models. Names of some vertices and facets are explicit.

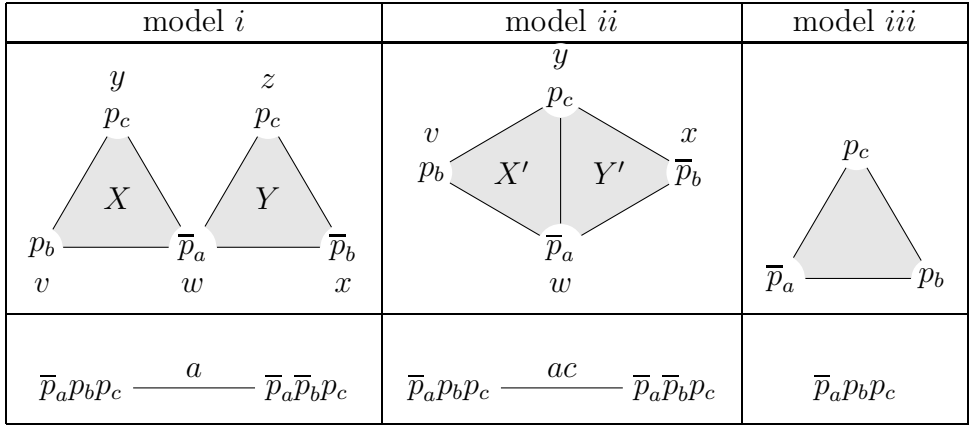
\begin{figure}[h]
\center
\begin{tabular}{|c|c|c|}
\hline
model $i$ & model $ii$ & model $iii$ \\
\hline
\begin{tikzpicture}[round/.style={circle,fill=white,inner sep=1}]
\fill[fill opacity = 0.1] (2,0) -- (4,0) -- (3,1.71) -- cycle;
\fill[fill opacity = 0.1] (0,0) -- (2,0) -- (1,1.71) -- cycle;
\node[round] (b1) at (0,0) {$p_b$};
\node[round] (b0) at (4,0) {$\np_b$};
\node[round] (c1) at (3,1.71) {$p_c$};
\node[round] (lc1) at (1,1.71) {$p_c$};
\node[round] (a0) at (2,0) {$\np_a$};
\node (ac1) at (3,2.21) {$z$};
\node (alc1) at (1,2.21) {$y$};
\node (bb1) at (0,-.5) {$v$};
\node (bb0) at (4,-.5) {$x$};
\node (ba0) at (2,-.5) {$w$};
\node (f0) at (3,.65) {$Y$};
\node (f1) at (1,.65) {$X$};
\draw[-] (b1) -- (a0);
\draw[-] (b1) -- (lc1);
\draw[-] (a0) -- (lc1);
\draw[-] (a0) -- (b0);
\draw[-] (b0) -- (c1);
\draw[-] (a0) -- (c1);
\end{tikzpicture}
&
\begin{tikzpicture}[round/.style={circle,fill=white,inner sep=1}]
\fill[fill opacity = 0.1] (2,0) -- (2,2) -- (3.71,1) -- cycle;
\fill[fill opacity = 0.1] (0.29,1) -- (2,0) -- (2,2) -- cycle;
\node[round] (b1) at (.29,1) {$p_b$};
\node[round] (b0) at (3.71,1) {$\np_b$};
\node[round] (c1) at (2,2) {$p_c$};
\node[round] (a0) at (2,0) {$\np_a$};
\node[round] (vb1) at (.29,1.5) {$v$};
\node[round] (vb0) at (3.71,1.5) {$x$};
\node[round] (vc1) at (2,2.5) {$y$};
\node[round] (va0) at (2,-.5) {$w$};
\node (f0) at (2.65,1) {$Y'$};
\node (f1) at (1.35,1) {$X'$};
\draw[-] (b1) -- (a0);
\draw[-] (b1) -- (c1);
\draw[-] (a0) -- (b0);
\draw[-] (b0) -- (c1);
\draw[-] (a0) -- (c1);
\end{tikzpicture}
&
\begin{tikzpicture}[round/.style={circle,fill=white,inner sep=1}]
\fill[fill opacity = 0.1] (2,0) -- (4,0) -- (3,1.71) -- cycle;
\node[round] (b0) at (4,0) {$p_b$};
\node[round] (c1) at (3,1.71) {$p_c$};
\node[round] (a0) at (2,0) {$\np_a$};
%
\draw[-] (a0) -- (b0);
\draw[-] (b0) -- (c1);
\draw[-] (a0) -- (c1);
\end{tikzpicture}
\\ 
\hline
&& \\
\begin{tikzpicture}
\node (010) at (.5,0) {$\np_ap_bp_c$};
\node (001) at (3.5,0) {$\np_a\np_bp_c$};
\draw[-] (010) -- node[above] {$a$} (001);
\end{tikzpicture}
&
\begin{tikzpicture}
\node (010) at (.5,0) {$\np_ap_bp_c$};
\node (001) at (3.5,0) {$\np_a\np_bp_c$};
\draw[-] (010) -- node[above] {$ac$} (001);
\end{tikzpicture}
&
\begin{tikzpicture}
\node (010) at (.5,0) {$\np_ap_bp_c$};
\end{tikzpicture} \\
\hline
\end{tabular}
\caption{Simplicial models and corresponding epistemic models}
\label{nogeenfiguur}
\end{figure}
First consider the public announcement by all of all they know, the communication pattern consisting of communication graph $U$ (the universal relation $\{a,b,c\}^2$). In the  simplicial model $(i)$ of Fig.~\ref{nogeenfiguur} we thus obtain the updated model on the left in Fig.~\ref{asdfasdf}. It consists of two disconnected parts (to avoid ambiguity we have framed the model in the figure).

Crucially, the vertex $w$ coloured with agent $a$ has been duplicated into two vertices in the updated simplicial model, $(w,X)$ and $(w,Y)$. This is because $\chi(Y)= \chi(X) = \{a,b,c\} = U a = U \chi(w)$. 

\begin{figure}[h]
\begin{center}
\begin{tabular}{|c|c|}
\hline
\begin{tikzpicture}[round/.style={circle,fill=white,inner sep=1}]
\fill[fill opacity = 0.1] (3,0) -- (5,0) -- (4,1.71) -- cycle;
\fill[fill opacity = 0.1] (0,0) -- (2,0) -- (1,1.71) -- cycle;
\node[round] (b1) at (0,0) {$p_b$};
\node[round] (b0) at (5,0) {$\np_b$};
\node[round] (c1) at (4,1.71) {$p_c$};
\node[round] (lc1) at (1,1.71) {$p_c$};
\node[round] (a0) at (2,0) {$\np_a$};
\node[round] (ra0) at (3,0) {$\np_a$};
\node (ac1) at (4,2.21) {$(z,Y)$};
\node (alc1) at (1,2.21) {$(y,X)$};
\node (bb1) at (0,-.5) {$(v,X)$};
\node (bb0) at (5,-.5) {$(x,Y)$};
\node (ba0) at (1.8,-.5) {$(w,X)$};
\node (rba0) at (3.2,-.5) {$(w,Y)$};
\node (f0) at (1,.65) {$X_U$};
\node (f1) at (5,.65) {$Y_U$};
\draw[-] (b1) -- (a0);
\draw[-] (b1) -- (lc1);
\draw[-] (a0) -- (lc1);
\draw[-] (ra0) -- (b0);
\draw[-] (b0) -- (c1);
\draw[-] (ra0) -- (c1);
\end{tikzpicture}
&
\begin{tikzpicture}[round/.style={circle,fill=white,inner sep=1}]
\fill[fill opacity = 0.1] (2,0) -- (4,0) -- (3,1.71) -- cycle;
\fill[fill opacity = 0.1] (0,0) -- (2,0) -- (1,1.71) -- cycle;
\node[round] (b1) at (0,0) {$p_b$};
\node[round] (b0) at (4,0) {$\np_b$};
\node[round] (c1) at (3,1.71) {$p_c$};
\node[round] (lc1) at (1,1.71) {$p_c$};
\node[round] (a0) at (2,0) {$\np_a$};
\node (ac1) at (3,2.21) {$(y,\{x,y\})$};
\node (alc1) at (1,2.21) {$(y,\{v,y\})$};
\node (bb1) at (0,-.5) {$(v,\{v\})$};
\node (bb0) at (4,-.5) {$(x,\{x\})$};
\node (ba0) at (2,-.5) {$(w,\{w\})$};
\node (f0) at (3,.65) {$Y_\cg$};
\node (f1) at (1,.65) {$X_\cg$};
\draw[-] (b1) -- (a0);
\draw[-] (b1) -- (lc1);
\draw[-] (a0) -- (lc1);
\draw[-] (a0) -- (b0);
\draw[-] (b0) -- (c1);
\draw[-] (a0) -- (c1);
\end{tikzpicture} \\
\hline
\end{tabular}
\end{center}
\caption{Left, the updated model $(i)$ of Fig.~\ref{nogeenfiguur}. Right, the updated model $(ii)$ of Fig.~\ref{nogeenfiguur}.}
\label{asdfasdf}
\end{figure}
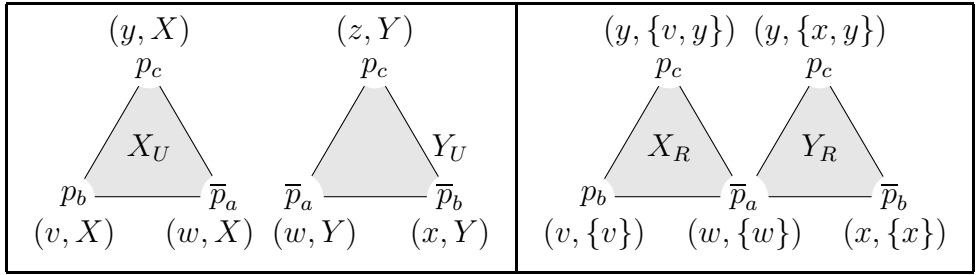

Now consider the simplicial model $(ii)$ in Fig.~\ref{nogeenfiguur} and the (singleton) communication pattern consisting of the communication graph $\cg^{bc} = \{(b,c)\} \union I$ wherein agent $c$ receives a message from agent $b$ (cf.\ Example~\ref{example.byzantine}). Although $a$ and $c$ were uncertain about the value of $p_b$, only $c$ will now learn that value. Fig.~\ref{asdfasdf} depicts the updated simplicial model. (Note that it is bisimilar ---even isomorphic--- to simplicial model $(i)$ in Fig.~\ref{nogeenfiguur}.)

Crucially, the vertex $w$ coloured with agent $a$ has now {\bf not} been duplicated into two vertices in the updated simplicial model: because $\cg^{bc} a = \{a\}$, only $(w,\{w\})$ is found in the updated simplicial model. Whereas agent $c$ has gained information from this update, as there are two different simplices coloured with $\cg^{bc} c = \{b,c\}$, namely $\{x,y\}$ and $\{v,y\}$.
\end{example}

\subsection{Correspondence between epistemic and simplicial models}

We continue by defining correspondence between epistemic models and simplicial models, and their updates with communication patterns. We first recall from \cite{GoubaultLR21,ledent:2019} the map $\sigma: \mathcal K \imp \mathcal S$ transforming an epistemic model into a simplicial model with the same information content, and the map $\kappa: \mathcal S \imp \mathcal K$ transforming a simplicial model into an epistemic model with the same information content, where we follow the presentation as in \cite{hvdetal.simpl:2022}.

\begin{definition}[Epistemic model to simplicial model]
Given an epistemic model $M = (W,\sim,L)$, we define $\sigma(M) = (C,\chi,\ell)$ as follows. The complex $C := \{ \{([w]_a,a) \mid a \in B\} \mid w \in W, B \subseteq A, B \neq \emptyset \}$, where we recall that $[w]_a = \{ v \in W \mid w \sim_a v\}$. Its vertices are therefore $\VV(C) = \{ ([w]_a,a) \mid a \in A, w \in W\}$, where we also define that $\chi([w]_a,a):=a$, and for all $p_a \in P$, $p_a \in \ell([w]_a,a)$ iff $p_a \in L(w)$. And its facets are therefore $\FF(C) = \{ \{([w]_a,a) \mid a \in A\} \mid w \in W\}$. For a facet we also write $\sigma(w)$. Then, $([w]_a,a) \in \sigma(w) \inter \sigma(v)$ iff $w \sim_a v$; that is, such that $[w]_a = [v]_a$. 
\end{definition}
This definition causes states that are indistinguishable for all agents to determine the same facet: if $[w]_a = [v]_a$ for all $a \in A$, then $\{([w]_a,a) \mid a \in A\} =  \{([v]_a,a) \mid a \in A\}$. For a pointed simplicial model $(\sigma(M),\sigma(w))$ we also write $\sigma(M,w)$.

\begin{definition}[Simplicial model to epistemic model]
Given a simplicial model $\C = (C,\chi,\ell)$, we define  an epistemic model  $\kappa(\C) = (W,\sim,L)$ as follows. Its domain $W$ is $\FF(C)$. For $X \in \FF(C)$ we may write $\kappa(X)$ instead of $X$ to make explicit that $X$ is a world and not a facet. We define $\kappa(X) \sim_a \kappa(Y)$ iff $a \in \chi(X \inter Y)$, and $L(\kappa(X)) = \ell(X)$.
\end{definition}
For a pointed epistemic model $(\kappa(\C),\kappa(X))$ (or $(\kappa(\C),X)$) we may write $\kappa(\C,X)$. It is easy to see that $\sigma(M)$ above is indeed a simplicial model, and that $\kappa(\C)$ above is indeed an epistemic model. In particular, $\kappa(\C)$ satisfies the requirement that it is local: if $X \sim_a Y$ in $\kappa(\C)$, there is a $v \in X \inter Y$ with $\chi(v)=a$; therefore, for all $p_a \in P_a$, $p_a \in \ell(v)$ iff $p_a \in \ell(X)$ and also $p_a \in \ell(v)$ iff $p_a \in \ell(Y)$, so that $p_a \in \ell(X)$ iff $p_a \in \ell(Y)$.

We continue by showing how this correspondence between epistemic models and simplicial models helps us to obtain results for bisimulation, update, truth, and validity.

We first show that our primitive dynamic object, the communication pattern, indeed similarly updates epistemic models and simplicial models; in other words, that the rather different update mechanisms after all produce `the same'  result. The properties shown in the following Th.~\ref{theo.zxcv} are visualized in the diagrams of Fig.~\ref{fig.zxcv}.

\begin{figure}[h]
\center
\begin{tikzpicture}
\node (00) at (0,0) {$M$};
\node (01) at (0,2) {$\sigma(M)$};
\node (10) at (2.5,0) {$M \odot \cp$};
\node (11) at (2.5,2) {$\sigma(M) \oslash \cp$};
\draw[->] (00) -- node[above,fill=white,inner sep=1pt] {$\odot$} (10);
\draw[->] (01) -- node[above,fill=white,inner sep=1pt] {$\oslash$} (11);
\draw[->] (00) -- node[right,fill=white,inner sep=1pt] {$\sigma$} (01);
\draw[->] (10) -- node[right,fill=white,inner sep=1pt] {$\sigma$} (11);
\end{tikzpicture}
\qquad\qquad
\begin{tikzpicture}
\node (00) at (0,0) {$\C$};
\node (01) at (0,2) {$\kappa(\C)$};
\node (10) at (2.5,0) {$\C \oslash \cp$};
\node (11) at (2.5,2) {$\kappa(\C) \odot \cp$};
\draw[->] (00) -- node[above,fill=white,inner sep=1pt] {$\oslash$} (10);
\draw[->] (01) -- node[above,fill=white,inner sep=1pt] {$\odot$} (11);
\draw[->] (00) -- node[right,fill=white,inner sep=1pt] {$\kappa$} (01);
\draw[->] (10) -- node[right,fill=white,inner sep=1pt] {$\kappa$} (11);
\end{tikzpicture}
\caption{Commuting diagrams visualizing Th.~\ref{theo.zxcv}}
\label{fig.zxcv}
\end{figure}

\begin{theorem} \label{theo.zxcv}
Let $M$, $\C$ and $\cp$ be given. Then $\sigma(M \odot \cp)$ is bisimilar to $\sigma(M)\oslash\cp$, and $\kappa(\C\oslash\cp)$ is bisimilar to $\kappa(\C)\odot\cp$.
\end{theorem}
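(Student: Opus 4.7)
The plan is to exhibit, for each of the two claims, an explicit bisimulation between the two structures. The correspondence is driven by the fact that in both constructions the updated objects are indexed by pairs (original world/facet, communication graph), and the semantic content of sharing an $a$-coloured vertex reduces on both sides to the same condition.

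For the first claim I would define
\[ \RR \;=\; \{\,(\sigma(w,\cg),\,(\sigma(w))_\cg) \mid w \in W,\ \cg \in \cp\,\}, \]
a relation between the facets of $\sigma(M \odot \cp)$ and the facets of $\sigma(M) \oslash \cp$. Atoms are immediate: on the left the facet $\sigma(w,\cg)$ records the valuation $\dot L(w,\cg) = L(w)$, while on the right, for each $v \in \sigma(w)$ there is exactly one $X \subseteq \sigma(w)$ with $\chi(X) = \cg\chi(v)$ (namely $X = \{([w]_b,b) \mid b \in \cg\chi(v)\}$), and the valuation of $(v,X)$ is $\ell(v)$, so the union over $v \in \sigma(w)$ yields $L(w)$ again. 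The key step is to unfold the two sharing conditions. On the left, $\sigma(w,\cg)$ and $\sigma(w',\cg')$ share an $a$-vertex iff $(w,\cg) \dot\sim_a (w',\cg')$ iff $w \sim_{\cg a} w'$ and $\cg a = \cg' a$. On the right, the $a$-vertex of $(\sigma(w))_\cg$ is $(([w]_a,a),\{([w]_b,b) \mid b \in \cg a\})$, so matching it with the $a$-vertex of $(\sigma(w'))_{\cg'}$ forces $\cg a = \cg' a$ and $[w]_b = [w']_b$ for all $b \in \cg a$, that is, $w \sim_{\cg a} w'$ and $\cg a = \cg' a$. The two conditions coincide, so \textbf{forth} and \textbf{back} for any nonempty $B \subseteq A$ are witnessed by taking $\sigma(w'',\cg'') \mapsto (\sigma(w''))_{\cg''}$.

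For the second claim I would define
\[ \RR' \;=\; \{\,(Y_\cg,\,(Y,\cg)) \mid Y \in \FF(C),\ \cg \in \cp\,\}, \]
between the domain of $\kappa(\C \oslash \cp)$ (which is $\FF(\ddot C)$) and that of $\kappa(\C)\odot\cp$ (which is $\FF(C) \times \cp$). The atom check again reduces to showing $\ddot\ell(Y_\cg) = \ell(Y)$, which follows from the same ``one $X$ per vertex of $Y$'' observation. For the indistinguishability clause, on the left $Y_\cg \sim_a (Y')_{\cg'}$ iff some common $a$-coloured vertex $(v,X)$ exists, which forces $\chi(v)=a$, $v \in Y \inter Y'$, $\cg a = \cg' a$, and $X \subseteq Y \inter Y'$ with $\chi(X) = \cg a$; the latter just says $\cg a \subseteq \chi(Y \inter Y')$, i.e.\ $Y \sim_{\cg a} Y'$ in $\kappa(\C)$. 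On the right, $(Y,\cg) \dot\sim_a (Y',\cg')$ is by definition exactly $Y \sim_{\cg a} Y'$ and $\cg a = \cg' a$. Collective indistinguishability $\sim_B$ is the intersection of these, so once more \textbf{forth} and \textbf{back} are witnessed by the obvious choice.

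I do not expect a serious obstacle: the whole proof is bookkeeping around the definitions. The one place that requires mild care is verifying that the construction of a facet $Y_\cg$ yields, for each $v \in Y$, a uniquely determined $X$ (which is true because facets of a chromatic complex contain at most one vertex per colour, so the requirement $\chi(X) = \cg\chi(v)$ together with $X \subseteq Y$ pins $X$ down). Using this, the matching between indices and coloured vertices on each side is clean, and the bisimulations $\RR$ and $\RR'$ verify all three clauses essentially by inspection.
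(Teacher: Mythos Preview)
Your proposal is correct and follows essentially the same approach as the paper: you define the same two bisimulations $\sigma(w,\cg)\mapsto(\sigma(w))_\cg$ and $Y_\cg\mapsto(Y,\cg)$, and verify \textbf{forth}/\textbf{back} by unfolding the $a$-vertex sharing condition on each side to the common form ``$w\sim_{\cg a} w'$ and $\cg a=\cg' a$'' (resp.\ ``$\cg a\subseteq\chi(Y\cap Y')$ and $\cg a=\cg' a$''). The uniqueness-of-$X$ observation you flag is exactly what makes the vertex-level correspondence work, and the paper uses it implicitly in the same way.
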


\begin{proof}
To prove the first, consider the relation $\RR$ between the domain of $\sigma(M \odot \cp)$ and the domain of $\sigma(M)\oslash\cp$ defined as $\RR: \sigma(w,\cg) \mapsto \sigma(w)_\cg$ 
which induces a relation (with the same name) between vertices of these facets such that
 $\RR: ([w,\cg]_a,a) \mapsto  (([w]_a,a), \{([w]_b,b) \mid b \in \cg a\})$. This relation is a bisimulation:

\medskip

{\bf atoms}: $p_a \in \ell([w,\cg]_a,a)$, iff (def.\ of $\sigma$) $p_a \in \dot L(w,\cg)$, iff (def.\ of $\odot$) $p_a \in L(w)$, iff (def.\ of $\sigma$) $p_a \in \ell([w]_a,a)$, iff (def.\ of $\oslash$) $p_a \in \ddot\ell(([w]_a,a),\{([w]_b,b) \mid b \in \cg a\})$.

\medskip

{\bf forth}:  Given $(\sigma(w,\cg),\sigma(w)_\cg)\in\RR$, let $B \subseteq \sigma(w,\cg)\inter\sigma(v,S)$ be nonempty. As $\sigma(v)_S$ is by definition the required facet bisimilar to $\sigma(v,S)$, it is sufficient to show that $B \subseteq \chi(\sigma(w,\cg)\inter\sigma(v,S))$ implies $B \subseteq \chi(\sigma(w)_\cg \inter \sigma(v)_S)$. Instead of `implies' we show `if and only if', as this then also takes care of {\bf back}:

\bigskip

\noindent  
$B \subseteq \chi(\sigma(w,\cg)\inter\sigma(v,S))$\\ $\Eq$ \\ 
for all $a \in B$, $a \in  \chi(\sigma(w,\cg)\inter\sigma(v,S))$\\ $\Eq$ \\ 
for all $a \in B$, $([w,\cg]_a,a) = ([v,S]_a,a)$\\ $\Eq$ \\ 
for all $a \in B$, $(w,\cg) \dot\sim_a (v,S)$\\ $\Eq$ \\ 
for all $a \in B$, $w \sim_{\cg a} v$ and $\cg a = S a$   \\ $\Eq$ $(*)$ \\
for all $a \in B$, $([w]_a,a) = ([v]_a,a)$ \& $\{([w]_b,b) \mid b \in \cg a\}=\{([v]_b,b) \mid b \in S a\}$\\ $\Eq$ \\ 
for all $a \in B$, $(([w]_a,a),\{([w]_b,b) \mid b \in \cg a\}) = (([v]_a,a),\{([v]_b,b) \mid b \in S a\})$\\ $\Eq$ \\ 
for all $a \in B$, $a \in \chi(\sigma(w)_\cg \inter \sigma(v)_S)$\\ $\Eq$ \\ 
$B \subseteq \chi(\sigma(w)_\cg \inter \sigma(v)_S)$.

\bigskip 
\noindent $(*)$: $w \sim_{\cg a} v$, iff for all $b \in \cg a$, $w \sim_b v$, iff for all $b \in \cg a$, $[w]_b = [v]_b$, iff for all $b \in \cg a$, $([w]_b,b) = ([v]_b,v)$, iff $\{([w]_b,b) \mid b \in \cg a\}=\{([v]_b,b) \mid b \in S a\}$. Also note that $a \in \cg a$, so that in particular $([w]_a,a) = ([v]_a,a)$.

\medskip

{\bf back}: Similar to {\bf forth}.

\bigskip

To prove the second, consider the relation $Z$ between the domain of $\kappa(\C\oslash\cp)$ and the domain of $\kappa(\C)\odot\cp$ defined as $Z: X_\cg \mapsto (X,\cg)$. Also this relation is a bisimulation:
 
 \medskip
 
{\bf atoms}: $p_a \in L(X_\cg)$, iff $p_a \in \ddot\ell(X_\cg)$, iff $p_a \in \ddot\ell(v,Y)$ for $(v,Y)\in X_\cg$ with $\chi(v)=a$, iff (def.\ $\oslash$) $p_a \in \ell(v)$ with $v \in X$ and $\chi(v)=a$, iff  $p_a \in L(X)$, iff (def.\ $\odot$) $p_a \in \dot L(X,R)$.

\medskip

{\bf forth}:  As it is obvious that $Y_S$ is bisimilar to $(Y,S)$, similarly to above it is sufficient to show that $X_\cg \sim_B Y_S$ iff $(X,\cg) \sim_B (Y,S)$, that we can then use to prove not only {\bf forth} but also {\bf back}.

\bigskip

\noindent
$X_\cg \dot\sim_B Y_S$ \\ $\Eq$ \\
for all $a \in B$, $X_\cg \dot\sim_a Y_S$ \\ $\Eq$ (def.\ $\kappa$) \\
for all $a \in B$, $a \in \ddot\chi(X_\cg \inter Y_S)$\\ $\Eq$ (def.\ $\oslash$) \\
for all $a \in B$, $(v,Z) = (w,W)$ for $(v,Z)\in X_\cg$, $(w,W)\in Y_S$ with $\chi(v)=\chi(w)=a$\\ $\Eq$ $(**)$ \\ 
for all $a \in B$, $\cg a \subseteq \chi(X \inter Y)$ and $\cg a = S a$ \\ $\Eq$ (def.\ $\sigma$) \\
for all $a \in B$, $X \sim_{\cg a} Y$ and $\cg a = S a$ \\ $\Eq$ (def.\ $\dot\sim$) \\
for all $a \in B$, $(X,\cg) \dot\sim_a (Y,S)$ \\ $\Eq$ (def.\ $\odot$) \\
$(X,\cg) \sim_B (Y,S)$

\bigskip

\noindent $(**)$  Left to right: $(v,Z)=(w,W)$, iff $v =z$ and $Z = W$. Also, $Z=W$, $Z\subseteq X$, and $W \subseteq Y$ imply $Z \subseteq X \inter Y$. Also, as $(v,Z) \in X_\cg$,  $\chi(Z)= \cg a$ so that $\cg a \subseteq \chi(X \inter Y)$. Right to left: obvious.

\bigskip

{\bf back}: Similar to {\bf forth}.
\end{proof}

For standard bisimilarity between simplicial models it has been shown that for any $M$ and $\C$, $\kappa(\sigma(M)) \bisim M$ and $\sigma(\kappa(\C)) \bisim \C$ \cite{ledent:2019,GoubaultLR21}. The authors actually demonstrate isomorphy, based on an additional assumption that epistemic models are what they call \emph{proper}, which means that ${\inter_{a \in A}\sim_a} = {I}$. It is not hard to show that these results also hold for the collective bisimilarity of this contribution. Isomorphy does not hold.

\begin{proposition} \label{prop.fiets}
For any $M$ and $\C$, $\kappa(\sigma(M)) \bisim M$ and $\sigma(\kappa(\C)) \bisim \C$. 
\end{proposition}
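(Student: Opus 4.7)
The plan is to exhibit the obvious candidate collective bisimulations in each direction. For the first claim, use $Z = \{(\sigma(w),w) \mid w \in \domain(M)\}$ relating the domain of $\kappa(\sigma(M))$ to that of $M$; for the second claim, use $\RR = \{(\sigma(X),X) \mid X \in \FF(C)\}$ relating facets of $\sigma(\kappa(\C))$ to facets of $\C$.

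The key technical observation, driving both verifications at once, is the agent-wise coincidence between the two relational structures. Unfolding the definitions gives $a \in \chi(\sigma(w) \inter \sigma(v)) \Eq ([w]_a,a) = ([v]_a,a) \Eq w \sim_a v$, and symmetrically $X \sim_a Y$ in $\kappa(\C)$ is, by definition, $a \in \chi(X \inter Y)$. Since $\sim_B = \Inter_{a \in B} \sim_a$ on the epistemic side and ``$B \subseteq \chi(\cdot)$'' means ``$a \in \chi(\cdot)$ for every $a \in B$'' on the simplicial side, quantifying over $B$ immediately upgrades these to the collective equivalences $\sigma(w) \sim_B \sigma(v) \Eq w \sim_B v$ and $B \subseteq \chi(\sigma(X) \inter \sigma(Y)) \Eq B \subseteq \chi(X \inter Y)$, which is precisely what is needed for collective (rather than merely standard) bisimulation. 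Because every facet of $\sigma(M)$ is of the form $\sigma(u)$ for some $u \in \domain(M)$, and every facet of $\sigma(\kappa(\C))$ is of the form $\sigma(Y)$ for some $Y \in \FF(C)$, the \textbf{forth} witness can always be taken as the corresponding $u$ or $Y$, and \textbf{back} is symmetric.

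The \textbf{atoms} clauses follow by straightforward unfolding. For the first relation, $\ell(\sigma(w)) = \Union_{a \in A} \ell([w]_a,a) = L(w)$, using the defining clause $p_a \in \ell([w]_a,a) \Eq p_a \in L(w)$ together with $P = \Union_{a \in A} P_a$; hence the valuation induced on $\kappa(\sigma(M))$ agrees with that of $M$. Dually, $\ell'(\sigma(X)) = \ell(X)$ via the valuation on $\kappa(\C)$ followed by the definition of $\sigma$, where locality of $\kappa(\C)$ is what guarantees that the $a$-coloured vertex of $X$ carries exactly the local atoms in $\ell(X) \inter P_a$.

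I do not anticipate any conceptual obstacle; the proof is essentially a careful unfolding of definitions. The only mild bookkeeping difficulty is the triple role played by each symbol --- as a facet, as a world, and as the target of a bisimulation pair --- together with the need to keep the passage from agent-wise to group-wise equivalences explicit on both sides. Note also why isomorphy fails in this generalised setting: $\sigma$ collapses any two worlds $w \neq v$ with $w \sim_A v$ to a single facet (locality forces $L(w)=L(v)$ and $[w]_a = [v]_a$ for all $a$), so $\kappa(\sigma(M))$ may be strictly smaller than $M$ while still being collectively bisimilar to it.
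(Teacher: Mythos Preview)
Your proposal is correct and takes essentially the same approach as the paper: the same candidate relations (up to the immaterial choice of direction, since bisimulation is symmetric), the same agent-wise equivalence $a \in \chi(\sigma(w)\inter\sigma(v)) \Eq w \sim_a v$ lifted to groups $B$, and the same unfolding for \textbf{atoms}. Your additional remark on why isomorphy fails (collapse of $\sim_A$-equivalent worlds under $\sigma$) is a nice complement to the paper's brief mention of this point.
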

\begin{proof}
Consider the relation $Z$ between $M$ and $\kappa(\sigma(M))$ defined as: for all $w \in W$, $Z: w \mapsto \{ ([w]_a,a) \mid a \in A \}$. We show that the relation $Z$ is a collective bisimulation.

\medskip

{\bf atoms}: $p_b \in L(w)$, iff (def.~$\sigma$) $p_b \in \ell(\{ ([w]_a,a) \mid a \in A \})$, iff (def.~$\kappa$, where we recall that the facets become the states) $p_b \in L(\{ ([w]_a,a) \mid a \in A \})$, which is $L(\kappa(\sigma(w))$.

\medskip

{\bf forth}: Let $w \sim_B v$. We show that $\{ ([v]_a,a) \mid a \in A \}$ satisfies the requirement. Obviously, $Z: v \mapsto \{ ([v]_a,a) \mid a \in A \}$. It remains to show that 
$\{ ([w]_a,a) \mid a \in A \} \sim_B \{ ([v]_a,a) \mid a \in A \}$:

\bigskip

\noindent
$w \sim_B v$ \\
$\Eq$ \\ 
$w \sim_b v$ for all $b \in B$ \\
$\Eq$ (def.~$\sigma$) \\
$([w]_b,b) \in \{ ([w]_a,a) \mid a \in A \} \inter \{ ([v]_a,a) \mid a \in A \}$ for all $b \in B$ \\ 
$\Eq$ (def.~of the chromatic map $\chi$ on $\sigma(M)$) \\
$b \in \chi(\{ ([w]_a,a) \mid a \in A \} \inter \{ ([w]_a,a) \mid a \in A \})$ for all $b \in B$ \\ $\Eq$ (def.~$\sim_b$ on $\kappa(\sigma(M))$) \\
$\{ ([w]_a,a) \mid a \in A \} \sim_b \{ ([w]_a,a) \mid a \in A \})$ for all $b \in B$ \\
$\Eq$ \\
$\{ ([w]_a,a) \mid a \in A \} \sim_B \{ ([v]_a,a) \mid a \in A \}$.

\bigskip

{\bf back}: This is shown similarly to {\bf forth}.

\bigskip

Now consider the relation $\RR$ between $\C$ and $\sigma(\kappa(\C))$ defined as: for all facets $X$ in $\C$, $\RR: X \mapsto \{([X]_a,a) \mid a \in A\}$. We show that the relation $\RR$ is a collective bisimulation.

\bigskip

{\bf atoms}: $p_b \in \ell(X)$, iff (def.~$\kappa$) $p_b \in L(X)$ in $\kappa(\C)$, iff (def.~$\sigma$) $p_b \in \ell([X]_b,b)$, iff (def.\ of $\ell$ on faces)  $p_b \in \ell(\{([X]_a,a) \mid a \in A\}) = \ell(\sigma(\kappa(X))$.

\bigskip

{\bf forth}: Let $B \subseteq \chi(X \inter Y)$ for a facet $Y$ of $\C$. Choose $\RR: Y \mapsto \{([Y]_a,a) \mid a \in A\}$. We need to show that $B \subseteq \chi(\{([X]_a,a) \mid a \in A\} \inter \{([Y]_a,a) \mid a \in A\})$:

\bigskip

\noindent
$B \subseteq \chi(X \inter Y)$ \\
$\Eq$ \\
$b \in \chi(X \inter Y)$ for all $b \in B$ \\
$\Eq$ (def.~$\kappa$) \\
$X \sim_b Y$ for all $b \in B$ \\
$\Eq$ (def.~$\sigma$) \\
$([X]_b,b) \in \{([X]_a,a) \mid a \in A\} \inter \{([Y]_a,a) \mid a \in A\}$ for all $b \in B$ \\
$\Eq$ \\
$b \in \chi(\{([X]_a,a) \mid a \in A\} \inter \{([Y]_a,a) \mid a \in A\})$ for all $b \in B$ \\
$\Eq$ \\
$B \subseteq \chi(\{([X]_a,a) \mid a \in A\} \inter \{([Y]_a,a) \mid a \in A\})$.

\bigskip

{\bf back}: This is shown similarly to {\bf forth}.
\end{proof}
We continue by showing that, also on simplicial models, collective bisimilarity implies modal equivalence  in the language $\lang$. 

\begin{theorem} \label{theo.final}
$(\C,X) \bisim (\C',X')$ implies $(\C,X) \equiv (\C',X')$.
\end{theorem}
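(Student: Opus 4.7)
The plan is to prove this by induction on the structure of $\phi \in \lang$, in direct parallel with the proof of Theorem~\ref{theorem.xx} for epistemic models. The inductive statement I would aim for is: for all simplicial models $\C, \C'$, all collective bisimulations $\RR : \C \bisim \C'$, and all $(Y,Y') \in \RR$, we have $\C,Y \models \phi$ iff $\C',Y' \models \phi$. The atomic case is immediate from \textbf{atoms} in the definition of bisimulation, the Boolean cases are routine, and the case $D_B\phi$ reduces to \textbf{forth} and \textbf{back} applied to the nonempty set $B \subseteq A$ together with the inductive hypothesis.

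The only substantial case is $[\cp,\cg]\phi$. For this I would lift $\RR$ to a relation $\dot\RR$ on the updated models by $\dot\RR = \{(Z_{\cg'}, Z'_{\cg'}) \mid (Z,Z') \in \RR, \ \cg' \in \cp\}$, and argue that $\dot\RR : \C\oslash\cp \bisim \C'\oslash\cp$ with $(X_\cg, X'_\cg) \in \dot\RR$. Applying the induction hypothesis to $\phi$ at the pair $(X_\cg, X'_\cg)$ then gives $\C\oslash\cp, X_\cg \models \phi$ iff $\C'\oslash\cp, X'_\cg \models \phi$, which by the simplicial clause for $[\cp,\cg]$ yields the required equivalence $\C,X \models [\cp,\cg]\phi$ iff $\C',X' \models [\cp,\cg]\phi$.

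The main obstacle is verifying \textbf{forth} (and symmetrically \textbf{back}) for the lifted relation. Suppose $(Y_\cg, Y'_\cg) \in \dot\RR$ and $B \subseteq \ddot\chi(Y_\cg \inter Z_S)$ for some facet $Z_S$ of $\C \oslash \cp$. Unpacking the update, for each $a \in B$ there must be a vertex $(v,W) \in Y_\cg \inter Z_S$ with $\chi(v) = a$, which forces $v \in Y \inter Z$, $W \subseteq Y \inter Z$, and $\chi(W) = \cg a = S a$. Hence $\cg B \equiv S B$ and $\cg B \subseteq \chi(Y \inter Z)$; the set $\cg B$ is nonempty since $B$ is and $\cg$ is reflexive. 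Applying \textbf{forth} for $\RR$ to the set $\cg B$ yields $Z' \in \FF(\C')$ with $(Z,Z') \in \RR$ and $\cg B \subseteq \chi'(Y' \inter Z')$. For each $a \in B$ one can then pick inside $Y' \inter Z'$ a subset $W'$ of colours exactly $\cg a$, which automatically contains the unique $a$-coloured vertex $v' \in Y' \inter Z'$ by reflexivity of $\cg$, witnessing $(v', W') \in Y'_\cg \inter Z'_S$ (using $S a = \cg a$). Thus $B \subseteq \ddot\chi'(Y'_\cg \inter Z'_S)$ and $(Z_S, Z'_S) \in \dot\RR$ by construction. The \textbf{atoms} clause is immediate from $\ddot\ell(Y_\cg) = \ell(Y) = \ell'(Y') = \ddot\ell'(Y'_\cg)$.

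An alternative route using only the machinery already in place would be to translate through $\kappa$: from $(\C,X) \bisim (\C',X')$ infer $(\kappa(\C),X) \bisim (\kappa(\C'),X')$ essentially by re-reading the bisimulation conditions using the equivalence $B \subseteq \chi(X \inter Y) \Leftrightarrow X \sim_B Y$ in $\kappa(\C)$, invoke Theorem~\ref{theorem.xx}, and close by a truth-transfer lemma $(\C,X) \equiv (\kappa(\C),X)$ proved by induction on $\phi$ with the update case handled via Theorem~\ref{theo.zxcv} combined with Theorem~\ref{theorem.xx}. Either route concentrates the real work in essentially the same lifting argument for the update modality.
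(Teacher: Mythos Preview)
Your proposal is correct and follows essentially the same approach as the paper: induction on $\phi$, with the update case handled by lifting the bisimulation $\RR$ to the relation $\{(Z_{\cg'},Z'_{\cg'}) \mid (Z,Z')\in\RR,\ \cg'\in\cp\}$ on the updated complexes and then invoking the induction hypothesis at $(X_\cg,X'_\cg)$. Your \textbf{forth} verification is in fact tidier than the paper's: you apply \textbf{forth} for the larger set $\cg B$ (exploiting $\cg a \subseteq \chi(Y\cap Z)$ for each $a\in B$), which directly yields $\cg a \subseteq \chi'(Y'\cap Z')$ and makes the construction of the witness $(v',W')\in Y'_\cg \cap Z'_S$ immediate, whereas the paper applies \textbf{forth} only for $B$ and then appeals to an informal ``pointwise relation'' between vertices of bisimilar facets to recover the needed vertices of colours in $\cg a \setminus B$.
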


\begin{proof}
We prove by induction on $\phi$ that ``for all $\phi\in\lang$, for all pointed simplicial models $(\C,X)$, $(\C',X')$: if $(\C,X) \bisim (\C',X')$ then $\C,X\models\phi$ iff $\C',X'\models\phi$''. The non-standard cases are the distributed knowledge case $D_B\phi$ and the update case $[\cp,\cg]\phi$, to which we therefore restrict the proof.

{\bf Case knowledge.} \ Given $(\C,X)$, $(\C',X')$, and $(\C,X)\bisim(\C',X')$, assume $\C',X' \models D_B\phi$. Let $\RR$ be a bisimulation between $\C$ and $\C'$ containing $(X,X')$. Let $Y \in C$ be arbitrary such that $B \subseteq \chi(X \inter Y)$. From $B \subseteq \chi(X \inter Y)$ and {\bf forth} it follows that there is a $Y' \in C'$ with $B \subseteq \chi'(X' \inter Y')$ and $(Y,Y') \in \RR$. From $B \subseteq \chi'(X' \inter Y')$ and $\C',X' \models D_B\phi$ it follows (by the semantics of $D_B$) that $\C',Y' \models \phi$. From $\C',Y' \models \phi$ and by induction hypothesis it follows that $\C,Y \models \phi$. Therefore, as $Y$ was arbitrary, $\C,X \models D_B\phi$. The other direction is similar.

{\bf Case update.} \ Given $(\C,X)$, $(\C',X')$, and $(\C,X)\bisim(\C',X')$, assume $\C,X \models [\cp,\cg]\phi$. Then $\C \oslash \cp, X_\cg \models \phi$. Let $\RR$ be a bisimulation between $\C$ and $\C'$ containing $(X,X')$. We first show that $\RR'$  defined as: $(X_\cg,X'_{\cg'}) \in \RR'$ if $(X,X') \in \RR$ and $\cg=\cg'$ 
is a bisimulation between $(\C \oslash \cp, X_\cg)$ and $(\C' \oslash \cp, X'_{\cg})$ containing $(X_\cg,X'_{\cg})$.

\medskip

{\bf atoms}: $\ddot \ell(X_\cg)=\ell(X)= \ell'(X')= \ddot \ell'(X'_\cg)$.

\medskip

{\bf forth}:  Let $B \subseteq X_\cg \inter Y_S$. As $(X_\cg,X'_{\cg}) \in \RR'$, also $(X,X') \in \RR$. As $B \subseteq X_\cg \inter Y_S$, also $B \subseteq X \inter Y$. From forth for $\RR$ we obtain a $Y'$ such that $(Y,Y') \in \RR$ and $B \subseteq X' \inter Y'$. We now show that $Y'_S$ is the required domain element to close {\bf forth}. For this, it remains to prove that $B \subseteq X_\cg \inter Y_S$ iff $B \subseteq X'_\cg \inter Y'_S$. Let $B \subseteq X_\cg \inter Y_S$ and $a \in B$. Then $(v,W) \in X_\cg$ with $\chi(v)=a$ satisfies that $W \subseteq X$ and $\chi(W)=\cg a$, and as $(v,W)\in Y_S$ as well, also $W \subseteq Y$ and $\chi(W)= S a$. We now (crucially) use that $(X,X') \in \RR$ induces a pointwise relation between all $x \in X$ and $x' \in X'$ with $\chi(x)=\chi'(x')=b$ for any $b \in \cg a$ (where we note that $b$ may not be in $B$), and similarly for $(Y,Y') \in \RR$. Therefore, for any such $(v,W)\in X_\cg$, we obtain $(v',W') \in X'_\cg$, so that $(v,W)\in X_\cg \inter Y_S$ implies $(v',W') \in X'_\cg \inter Y'_S$, as required.

\medskip

{\bf back}: Similar to {\bf forth}.

\medskip

Having established that $(\C\oslash\cp, X_\cg)$ is bisimilar to $(\C'\oslash\cp,X'_\cg)$, we now apply the induction hypothesis for $\phi$ on pointed models $(\C\oslash\cp, X_\cg)$ and $(\C'\oslash\cp,X'_\cg)$ thus obtaining that $(\C'\oslash\cp,X'_\cg) \models \phi$, so that $\C',X' \models [\cp,\cg]\phi$, as required.
\end{proof}
Even for the restricted language $\lang^-$ this result of Th.~\ref{theo.final} is novel. One can also show that $(\C,X) \equiv (\C',X')$ implies $(\C,X) \bisim (\C',X')$ on so-called {\em star-finite} simplicial models (and complexes). A complex $C$ is {\em star-finite} if for all faces $Y \in C$, the {\em star} of $Y$, defined as $\{ Z \in C \mid Y \subseteq Z\}$, is a finite set. This direction is similar to the proof that modal equivalence implies bisimulation on epistemic models; in particular it suffices to consider formulas in $\lang^-$. It is therefore omitted.\footnote{Sketch: define relation $\RR: X \mapsto X'$ iff $(C,X) \equiv (C',X')$. We show that $\RR$ is a bisimulation. {\bf Atoms} is obvious. Concerning {\bf forth}, let be given $(X,X') \in \RR$ and $B \subseteq X \inter Y$, and the \emph{finite} set of $Y'_1,\dots,Y'_n$ such that $B \subseteq X' \inter Y'_1$. If none of those are bisimilar to $Y$, there are distinguishing formulas $\phi_1,\dots,\phi_n$ all true in $Y$ but false in $Y'_1,\dots,Y'_n$ resp. Then, $C,X \models \hat{D}_B (\phi_1 \et\dots\et\phi_n)$ whereas $C',X' \not\models \hat{D}_B (\phi_1 \et\dots\et\phi_n)$, contradicting $(C,X)\equiv(C',X')$.} We therefore have again the Hennessy-Milner property for this semantics.
 
Finally, we now show that the $\sigma$ and $\kappa$ mappings preserve truth (value). This implies that validity is the same for both semantics, and that there is therefore only one communication pattern logic.

\begin{proposition} \label{prop.sametruth}
Let $\phi \in \lang$, $(M,w)$, and $(\C,X)$ be given. Then:
\begin{enumerate}
\item $M,w \models \phi$ iff $\sigma(M,w) \models \phi$; \label{aa}
\item $\C,X \models \phi$ iff $\kappa(\C,X) \models \phi$. \label{bb}
\end{enumerate}
\end{proposition}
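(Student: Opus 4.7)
The plan is to prove both items simultaneously by induction on the structure of $\phi\in\lang$. The two directions share the same skeleton, so I will describe it for item (\ref{aa}) and indicate the symmetric moves for item (\ref{bb}).

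For the base case $\phi=p_a$, the result is immediate from the definition of $\sigma$: by construction $p_a\in\ell([w]_a,a)$ iff $p_a\in L(w)$, and $p_a\in\ell(\sigma(w))$ is defined to hold iff $p_a$ belongs to the valuation of the $a$-coloured vertex of $\sigma(w)$. The Boolean cases $\neg\phi$ and $\phi\et\psi$ are routine and follow directly from the inductive hypothesis. For the knowledge case $D_B\phi$, the key observation is that facets of $\sigma(M)$ are in bijection (modulo the identification of all-indistinguishable states) with the states of $M$ via $v\mapsto \sigma(v)$, and that by definition of $\sigma$ and $\chi$ one has $B\subseteq\chi(\sigma(w)\inter\sigma(v))$ iff for every $a\in B$, $([w]_a,a)=([v]_a,a)$, iff $w\sim_B v$. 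Hence the two quantifications in the semantic clauses for $D_B$ range over the same set, and the induction hypothesis closes the case.

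The substantive case is the update modality $[\cp,\cg]\phi$. Here the plan is to combine Theorem~\ref{theo.zxcv} (commutation of $\sigma$/$\kappa$ with update) with the preservation of truth under collective bisimilarity established in Theorems~\ref{theorem.xx} and~\ref{theo.final}. Concretely, I will chain the following equivalences:
\[
\begin{array}{rl}
 M,w\models[\cp,\cg]\phi
 &\Eq\ M\odot\cp,(w,\cg)\models\phi\\
 &\Eq\ \sigma(M\odot\cp),\sigma(w,\cg)\models\phi \quad\text{(induction hypothesis on }\phi\text{)}\\
 &\Eq\ \sigma(M)\oslash\cp,\sigma(w)_\cg\models\phi \quad\text{(Thm.~\ref{theo.zxcv} + Thm.~\ref{theo.final})}\\
 &\Eq\ \sigma(M),\sigma(w)\models[\cp,\cg]\phi.
\end{array}
\]
The third step is delicate: Theorem~\ref{theo.zxcv} produces a bisimulation between $\sigma(M\odot\cp)$ and $\sigma(M)\oslash\cp$, and the explicit witness constructed in that proof maps $\sigma(w,\cg)$ to $\sigma(w)_\cg$; Theorem~\ref{theo.final} then lifts this to modal equivalence at the designated points in the language $\lang$, which is exactly what is needed. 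For item (\ref{bb}), the analogous chain uses the second half of Theorem~\ref{theo.zxcv}, with the bisimulation witness mapping $X_\cg$ to $(X,\cg)$, and invokes Theorem~\ref{theorem.xx} (bisimilarity implies modal equivalence on epistemic models) in place of Theorem~\ref{theo.final}.

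The main obstacle I anticipate is purely bookkeeping around the point-tracking in the update case: one must verify that the bisimulations exhibited in the proof of Theorem~\ref{theo.zxcv} do connect precisely the pointed models $\sigma(M\odot\cp,(w,\cg))$ and $(\sigma(M)\oslash\cp,\sigma(w)_\cg)$ (respectively $\kappa(\C\oslash\cp,X_\cg)$ and $(\kappa(\C)\odot\cp,(X,\cg))$), so that the modal-equivalence theorems can be applied at these points. Once this is in place, the induction goes through uniformly.
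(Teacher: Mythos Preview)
Your proposal is correct and follows essentially the same route as the paper: induction on $\phi$, with the update case handled by applying the induction hypothesis, then Theorem~\ref{theo.zxcv} for the commuting bisimulation, and the relevant bisimilarity-implies-equivalence result (Theorem~\ref{theo.final} or Theorem~\ref{theorem.xx}) to transfer truth across that bisimulation at the designated points. The paper's write-up is terser---it cites only Theorem~\ref{theo.zxcv} at the crucial step and leaves the appeal to Theorem~\ref{theo.final}/\ref{theorem.xx} implicit---so your version is in fact more explicit about the dependency, which is a good thing.
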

\begin{proof}
The proof is by induction on the structure of $\phi$. The Boolean cases are elementary and left to the reader.

\bigskip

\noindent
$M,w \models D_B\phi$, iff $M,v \models \phi$ for all $v \sim_B w$, iff (induction) $\sigma(M,v) \models \phi$ for all $v \sim_B w$, iff (def.~$\sigma$) $\sigma(M,v) \models \phi$ for all $\sigma(v)$ such that for all $b \in B$, $([v]_b,b) \in \sigma(v) \inter \sigma(w)$, iff $\sigma(M,v) \models \phi$ for all $\sigma(v)$ such that $B \subseteq \chi(\sigma(v) \inter \sigma(w))$, iff (semantics of distributed knowledge) $\sigma(M,w) \models D_B\phi$.

\bigskip

\noindent
$M,w \models [\cp,\cg]\phi$, iff (semantics on epistemic models) $M\odot\cp, (w,\cg) \models \phi$, iff (induction) $\sigma(M\odot\cp), (w,\cg) \models \phi$, iff (Th.~\ref{theo.zxcv}) $\sigma(M) \oslash\cp, w_\cg \models \phi$, iff (semantics on simplicial models) $\sigma(M), w \models [\cp,\cg]\phi$.

\bigskip

\noindent
$\C,X \models D_B\phi$, iff $\C,Y \models \phi$ for all $Y$ with $B \subseteq \chi(X \inter Y)$, iff (induction) $\kappa(\C),Y \models \phi$ for all $Y$ with $B \subseteq \chi(X \inter Y)$, iff (def.~$\kappa$) $\kappa(\C),Y \models \phi$ for all $Y$ with $X \sim_B Y$, iff $\kappa(\C),X \models D_B \phi$.

\bigskip

\noindent
$\C,X \models [\cp,\cg]\phi$, iff $\C\oslash\cp, X_\cg \models \phi$, iff (induction) $\kappa(\C\oslash\cp, X_\cg) \models \phi$, iff (Th.~\ref{theo.zxcv}) $\kappa(\C) \odot \cp, (X,\cg) \models \phi$, iff (semantics) $\kappa(\C),X \models [\cp,\cg]\phi$.
\end{proof}
Props.~\ref{prop.sametruth}.\ref{aa} and \ref{prop.sametruth}.\ref{bb} jointly establish that $M,w \models \phi$ iff $\kappa(\sigma(M,w)) \models \phi$ and also that $\C,X \models \phi$ iff $\sigma(\kappa(\C,X)) \models \phi$, and therefore:
\begin{corollary} For all $(M,w)$ and $(\C,X)$: $(M,w)\equiv\kappa(\sigma(M,w))$ and  $(\C,X)\equiv\sigma(\kappa(\C,X))$. \end{corollary}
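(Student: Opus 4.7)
The plan is that this corollary is a direct consequence of the two items of Prop.~\ref{prop.sametruth}, chained together. There is no real obstacle here; the only thing to notice is the applicability of each item to the right structure along the chain.

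For the first equivalence, I would fix an arbitrary $\phi \in \lang$ and apply Prop.~\ref{prop.sametruth}.\ref{aa} to $(M,w)$ to obtain $M,w \models \phi$ iff $\sigma(M,w) \models \phi$. Then I would apply Prop.~\ref{prop.sametruth}.\ref{bb} to the pointed simplicial model $(\C,X) := \sigma(M,w)$, which gives $\sigma(M,w) \models \phi$ iff $\kappa(\sigma(M,w)) \models \phi$. Chaining the two biconditionals yields $M,w \models \phi$ iff $\kappa(\sigma(M,w)) \models \phi$, and since $\phi$ was arbitrary, $(M,w) \equiv \kappa(\sigma(M,w))$ by definition of modal equivalence.

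For the second equivalence the argument is symmetric: given $\phi \in \lang$, apply Prop.~\ref{prop.sametruth}.\ref{bb} to $(\C,X)$ to get $\C,X \models \phi$ iff $\kappa(\C,X) \models \phi$, and then apply Prop.~\ref{prop.sametruth}.\ref{aa} to the pointed epistemic model $(M,w) := \kappa(\C,X)$ to obtain $\kappa(\C,X) \models \phi$ iff $\sigma(\kappa(\C,X)) \models \phi$. Again chaining, and quantifying over $\phi$, yields $(\C,X) \equiv \sigma(\kappa(\C,X))$.

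The only mildly delicate point is bookkeeping of the notation $\sigma(M,w) = (\sigma(M),\sigma(w))$ and $\kappa(\C,X) = (\kappa(\C),X)$ introduced just after the definitions of $\sigma$ and $\kappa$, so that the two items of Prop.~\ref{prop.sametruth} can be composed without a type mismatch. Once this is clear, the proof is just two lines per direction, so I would present it as a short remark rather than a displayed derivation.
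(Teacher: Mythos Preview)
Your proposal is correct and matches the paper's own argument: the corollary is obtained precisely by chaining Prop.~\ref{prop.sametruth}.\ref{aa} and Prop.~\ref{prop.sametruth}.\ref{bb} in each direction. The paper also remarks (as a side note, not the main proof) that the same conclusions alternatively follow from Prop.~\ref{prop.fiets} together with Th.~\ref{theorem.xx} and Th.~\ref{theo.final}, respectively.
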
 The first already followed from Prop.~\ref{prop.fiets} and Th.~\ref{theorem.xx}. The second already followed from Prop.~\ref{prop.fiets} and Th.~\ref{theo.final}.

\begin{theorem} \label{theo.samelogic}
The logic of communication patterns on epistemic models and the logic on communication patterns on simplicial models are the same.
\end{theorem}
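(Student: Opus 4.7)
The plan is to derive this theorem as a direct consequence of Prop.~\ref{prop.sametruth}, which already establishes that the mappings $\sigma$ and $\kappa$ preserve the truth value of every $\phi \in \lang$. The statement that ``the two logics are the same'' should be read as saying that the sets of valid formulas coincide; unpacking the definition of validity on each class of models, I need to show that $\phi$ is valid on every pointed epistemic model $(M,w)$ iff $\phi$ is valid on every pointed simplicial model $(\C,X)$.

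For the direction from epistemic validity to simplicial validity, I would fix an arbitrary $(\C,X)$ and invoke Prop.~\ref{prop.sametruth}.\ref{bb} to conclude that $\C,X \models \phi$ iff $\kappa(\C,X) \models \phi$; since $\kappa(\C,X)$ is a pointed epistemic model and $\phi$ is assumed valid on that class, the right-hand side holds, hence so does the left. For the converse direction I would do the symmetric argument via Prop.~\ref{prop.sametruth}.\ref{aa}: fix $(M,w)$, note that $\sigma(M,w)$ is a pointed simplicial model on which $\phi$ must hold by hypothesis, and transfer truth back to $(M,w)$.

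Since the axiomatization established in Th.~\ref{prop.bal} is sound and complete for the epistemic-model semantics, this equivalence also yields, as a free byproduct, that the same axiomatization in Table~\ref{table} is sound and complete for the simplicial semantics — I would mention this explicitly at the end to make the force of ``the same logic'' concrete.

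There is no real obstacle here; the work has all been done in Prop.~\ref{prop.sametruth}, which in turn relied on Th.~\ref{theo.zxcv} (the commutation of update with the translations) to handle the dynamic modality. The only care needed is notational: to be explicit that validity is quantified over pointed models of the respective class, and that $\kappa$ sends pointed simplicial models to pointed epistemic models (and $\sigma$ vice versa), so the transfer of validity is well-typed.
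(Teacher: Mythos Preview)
Your proposal is correct and follows essentially the same approach as the paper's proof: both directions use Prop.~\ref{prop.sametruth} (part~\ref{bb} via $\kappa$ for one direction, part~\ref{aa} via $\sigma$ for the other) to transfer validity between the two classes of models, and then observe that soundness and completeness of the axiomatization in Table~\ref{table} carries over to the simplicial semantics.
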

\begin{proof}
With Prop.~\ref{prop.sametruth} we first show that a formula is valid on epistemic models iff it is valid on simplicial models. 

Let $\phi\in\lang$ be given and assume that $\phi$ is valid on epistemic models. Let now $(\C,X)$ be arbitrary. Then $\kappa(\C,X)$ is an epistemic model. As $\phi$ is valid, $\kappa(\C,X) \models \phi$. Using Prop.~\ref{prop.sametruth}.\ref{bb}, we obtain that $\C,X\models \phi$. Therefore $\phi$ is valid on simplicial models. 

The proof in the other direction is similar, only now featuring $(M,w)$ and $\sigma(M,w)$.

As the validities in the respective semantics are the same, and the axiomatization of communication pattern logic interpreted on epistemic models is sound and complete, this is therefore the same logic when interpreted on simplicial models. 
\end{proof}
Th.~\ref{theo.samelogic} finally justifies why we did not distinguish notation for validity in the different semantics: $\models \phi$ means that $\phi$ is valid either way.

As the logics are the same for the different semantics, in particular the axiomatization is also a reduction system for the simplicial semantics, allowing to eliminate communication pattern modalities. We could therefore also have obtained Th.~\ref{theo.final} by restricting that proof to modal equivalence in the language $\lang^-$. However, as mentioned, this is already a result. Th.~\ref{theo.final} and Th.~\ref{theorem.xx} seem to contrast well.


\section{Conclusion and further research} \label{sec.comparison}

\paragraph*{Conclusion} We presented communication pattern logic to reason about distributed systems wherein agents communicate to each other all they know. We provided an original update construction, an axiomatization, 
apart from a semantics on Kripke models also one on simplicial complexes, and we showed how these semantics correspond, also in relation to collective bisimulation for distributed knowledge.

\paragraph*{Distributed computing and dynamic epistemic logic}
The investigation of dynamic epistemic logics with dynamic modalities for communication patterns seems promising for applications in distributed computing. Communication patterns are surprisingly different from other update mechanisms such as action models (discussed below) and arrow updates \cite{kooirenne}. They are suitable to model multi-agent system communication where we wish to abstract from the content of messages. Their dual use as update mechanisms on simplicial complexes appears to widen the scope for their use: combinatorial topology is an established field in distributed computing. Communication patterns appear to suitably formalize computability of distributed tasks that are known to require high-dimensional topological arguments, like \emph{consensus} (work in progress continuing \cite{diego:2021}), \emph{equality negation} (where it appears promising to proceed as in \cite{goubaultetal_postdali:2021}), \emph{set agreement}, and \emph{renaming}.

Even apart from dynamics, research questions on the relation between epistemic and simplicial semantics abound. Various questions, in particular of philosophical logical relevance, were addressed in the recent \cite{hvdetal.simpl:2022}. How restrictive is the requirement that epistemic models be local (that agents know their local state) \cite[Sect.~9]{hvdetal.simpl:2022}? Only such models correspond to complexes, whereas complexes are local by definition, as their building stones are agent-coloured vertices. How to model systems where some processors/agents have crashed, or, in more familiar philosophical logical terms, how to conceive multi-agent Kripke models wherein some agents have died \cite{Ditmarsch21,GoubaultLR22}? Can dead agents know something? Such subjects have established roots in philosophical logic and AI \cite{Lambert1967-KARFLA,DworkM90}. How to employ common knowledge and {\em common distributed knowledge}, a novel epistemic notion of proposed in \cite{vanwijk:2015,Baltag20}, to describe topological primitives in complexes? Common distributed knowledge characterizes \emph{manifolds} in complexes \cite[Sect.~6.2]{hvdetal.simpl:2022}). How to model belief on simplicial complexes \cite[Sect.~7]{hvdetal.simpl:2022}, and Byzantine phenomena such as deception and error \cite{abs-2106-11499} (in a runs-and-systems setting)?

Given the currently growing community of researchers active in this area we hope that many such issues and open questions will be investigated in the coming years and receive answers, and that the investigation of the dynamics of such systems will get a boost from the notion of communication pattern.

\bigskip

Let us shortly also describe in more technical detail three areas for further research. 

\paragraph*{Update expressivity} We wish to compare the \emph{update expressivity} of communication pattern logic and \emph{action model logic} \cite{baltagetal:1998}. Communication patterns, like action models, are examples of {\em updates} transforming epistemic models into other epistemic models. Communication patterns can always be executed, but action models have preconditions for their execution. For example, a truthful public announcement of $p$ requires $p$ to be true, otherwise it would not be truthful. One can compare updates by their {\em update expressivity} \cite{jveetal:2012,kooirenne,hvdetal.aus:2020}. Given a class of pointed models, such as the pointed epistemic models $(M,w)$ given a set of agents $A$ and atoms $P$ in our contribution, each update determines a binary relation on this class.\footnote{One pair in that relation for the update `public announcement of $p$' would then be $((M,w),(M',w))$, where $M'$ is the restriction of $M$ to the $p$-states, and this on condition that $w$ is also in that restriction.} This relation is a partial order. 

It is fairly obvious that communication pattern logic is not at least as update expressive as action model logic. In a public announcement, the environment may reveal something that cannot be revealed by the agents jointly, such as in Ex.~\ref{ex.pa} the announcement of $m_a \vel m_b \vel m_c$, `One of you is muddy', by the father, who is not modelled as an agent. Also, even when an announcement is made by an agent in the system, she may choose to reveal only some but not all of her local variables, such as, if $a$ knows $p_a$ and $p'_a$, $a$ informing $b$ of $p_a$ but not of $p'_a$. 

It is not obvious, we think, that action model logic is not at least as update expressive as communication pattern logic, except for the case where there are infinitely many local variables, as an infinite conjunction of literals (variables or their negation) cannot be announced. However, if there are finitely many local variables, proving the same requires a delicate argument \cite{armandoetal.tark:2023}. 

There is also {\em expressivity} in the usual sense, namely that every formula in a given language is equivalent to a formula in another logical language. One can then simply observe that communication pattern logic and action model logic, both with distributed knowledge, are equally expressive, as they both reduce to the logic of distributed knowledge. The latter was shown in \cite[Th.~15]{WangA15}.

\paragraph*{How to represent a communication pattern?} Although we represent a communication pattern as a set of communication graphs, we can also see it as a kind of Kripke model, and also as a kind of simplicial complex. Such alternative representations may make it easier to conceive and obtain more theoretical results for our framework.

Given $\cp$, {\em communication pattern model} $(\cp,\sim)$ is the structure where for each $a \in A$, $\sim_a$ is an equivalence relation defined as: $\cg \sim_a \cg'$ iff $\cg a = \cg' a$. The update of an epistemic model with a communication pattern model is now a restricted  product. It is restricted because the relations are restricted. We recall that $(w,\cg) \sim_a (w',\cg')$ iff $w \sim_{\cg a} w'$ and $\cg a= \cg' a$. This now has become the requirement that $w \sim_{\cg a} w'$ and $\cg \sim_a \cg'$. This is the representation of communication patterns in \cite{diego:2021}, and a similar representation is chosen for the {\em arbitrary reading events} in \cite[Sect.~6]{Baltag20}.

Alternatively, given $\cp$, {\em simplicial communication pattern} $(C^{\cp}, \chi^{\cp})$ is the chromatic simplicial complex with simplices $\{ (a,\cg a) \mid a \in B, B \subseteq A, \cg \in \cp \}$, and for all $(a,\cg a) \in \VV(C^{\cp})$, $\chi^{\cp}(a,\cg a)=a$. The update of a simplicial model with a simplicial communication pattern is again some kind of product (but not a restricted product). The simplicial communication pattern is reminiscent of the simplicial action model in \cite{GoubaultLR21,hvdetal.simpl:2022} and in \cite[Def.\ 3.34 \& 3.35]{ledent:2019}. 

\begin{example}
Once more we recall Ex.~\ref{example.byzantine} modeling Byzantine attack, and communication pattern $\cp = \{I,R^{ab}\}$ where $R^{ab} = I \union \{(a,b)\} = \{(a,a),(b,b),(a,b)\}$. Let us show the different representations.
\begin{enumerate}
\item
communication pattern:  $\{I, R^{ab}\}$
\item
communication pattern model: $I \stackrel{b}{\text{------}}R^{ab}$
\item \label{cpp}
communication pattern complex:  $(b,I b)$------$(a,R^{ab} a)$------$(b,R^{ab} b)$ 
\item
item \ref{cpp} but with explicit groups: $(b,\{b\})$------$(a,\{a\})$------$(b,\{a,b\})$
\end{enumerate}
The first item shows once the communication pattern as a set. The second item is a corresponding communication pattern model (where ${\sim_a} = I$). The third item is a communication pattern complex. As it is for two agents it is of dimension $1$ (no triangles but edges, as in Ex.~\ref{annebill}). We recall that $I b = \{b\}$, $I a = R^{ab} a = \{a\}$, and $R^{ab} = \{a,b\}$ so that the communication pattern complex can more appealing be visualized as the one below it. 
\end{example}

\paragraph*{History-based epistemic models} With {\em history-based semantics} \cite{jfaketal.JPL:2009} we can keep track of rounds of communicative events. Thus we can make precise how \emph{full-information protocols}, where in each round all agents send the tree composed of their own call history and the histories they have received from other agents, correspond to iterated update with communication patterns. These matters are also investigated in \cite{diego:2021} and ongoing work continuing that. In \cite{armandoetal.tark:2023}, history-based semantics allow to compare action models and communication patterns on update expressivity. 

\subsection*{Acknowledgements} We thank the reviewers for their constructive comments, and for their encouragement to substantially expand the original submission. Armando Castañeda was partially supported by PAPIIT projects IN108720
and IN108723. Hans van Ditmarsch is grateful to Sergio Rajsbaum for opening his eyes to the beautiful duality between Kripke models and simplicial complexes, during Hans' visit to Mexico in 2019, and for introducing him to many of his collaborators and students, including the co-authors of this work. 


\bibliographystyle{plainurl}
\bibliography{biblio2023}

\end{document}